\documentclass[conference]{IEEEtran}
\IEEEoverridecommandlockouts

\usepackage{cite}
\usepackage{amsmath,amssymb,amsfonts}
\usepackage{algorithmic}
\usepackage{graphicx}
\usepackage{latexsym}
\usepackage{xcolor}
\usepackage{comment}
\usepackage{cite}
\usepackage{url}
\usepackage[T3,T1]{fontenc}
\DeclareSymbolFont{tipa}{T3}{cmr}{m}{n}
\DeclareMathAccent{\invbreve}{\mathalpha}{tipa}{16}
\usepackage{newtxtext}
\usepackage[varg]{newtxmath}
\usepackage{moreverb}
\usepackage{upgreek}
\usepackage{mathrsfs}
\usepackage{mathtools}
\usepackage{bm}
\def\BibTeX{{\rm B\kern-.05em{\sc i\kern-.025em b}\kern-.08em
    T\kern-.1667em\lower.7ex\hbox{E}\kern-.125emX}}

\newtheorem{theorem}{Theorem}
\newtheorem{lemma}{Lemma}
\newtheorem{proposition}{Proposition}
\newtheorem{property}{Property}
\newtheorem{remark}{Remark}
\newtheorem{definition}{Definition}
\newtheorem{corollary}{Corollary}

\allowdisplaybreaks

\newcommand{\wt}{\widetilde}

\newcommand{\expectation}{\mathrm{E}}

\newcommand{\phin}{\phi^{(n)}}
\newcommand{\phinX}{\phi^{(n)}(\bm{X})}
\newcommand{\phinx}{\phi^{(n)}(\bm{x})}

\newcommand{\psin}{\psi^{(n)}}

\newcommand{\Phin}{\Phi^{(n)}}
\newcommand{\PhiKn}{\Phi_{\bm{K}}^{(n)}}
\newcommand{\PhiKnX}{\Phi_{\bm{K}}^{(n)}(\bm{X})}
\newcommand{\PhiXnK}{\Phi_{\bm{X}}^{(n)}(\bm{K})}
\newcommand{\Phiknx}{\Phi_{\bm{k}}^{(n)}(\bm{x})}
\newcommand{\PhiknX}{\Phi_{\bm{k}}^{(n)}(\bm{X})}

\newcommand{\wtPhin}{\widetilde{\Phi}^{(n)}}

\newcommand{\wtPhiknx}{\widetilde{\Phi}_{\bm{k}}^{(n)}(\bm{x})}

\newcommand{\LPhiknX}{L_{\Phi_{\bm{k}}^{(n)}}(\bm{X})}

\newcommand{\Psin}{\Psi^{(n)}}
\newcommand{\PsiKn}{\Psi_{\bm{K}}^{(n)}}

\newcommand{\wtPsin}{\widetilde{\Psi}^{(n)}}

\newcommand{\Dn}{\mathcal{D}^{(n)}}

\newcommand{\MI}{\Delta_{\mathrm{MI}}^{(n)}}

\newcommand{\hugel}{{\arraycolsep 0mm
                    \left\{\ba{l}{\,}\\{\,}\ea\right.\!\!}}

\newcommand{\huger}{{\arraycolsep 0mm
                    \left.\ba{l}{\,}\\{\,}\ea\!\!\right\}}}

\newcommand{\hugebl}{{\arraycolsep 0mm
                    \left[\ba{l}{\,}\\{\,}\ea\right.\!\!}}

\newcommand{\hugebr}{{\arraycolsep 0mm
                    \left.\ba{l}{\,}\\{\,}\ea\!\!\right]}}

\newcommand{\defeq}{:=}

\newcommand{\beq}{\begin{equation}}
\newcommand{\eeq}{\end{equation}}
\newcommand{\beqa}{\begin{eqnarray}}
\newcommand{\eeqa}{\end{eqnarray}}
\newcommand{\beqno}{\begin{eqnarray*}}
\newcommand{\eeqno}{\end{eqnarray*}}
\newcommand{\ba}{\begin{array}}
\newcommand{\ea}{\end{array}}

\newcommand{\MEq}[1]{\stackrel{
{\rm (#1)}}{=}}

\newcommand{\MLeq}[1]{\stackrel{
{\rm (#1)}}{\leq}}

\newcommand{\MGeq}[1]{\stackrel{
{\rm (#1)}}{\geq}}

\newcommand{\MG}[1]{\stackrel{
{\rm (#1)}}{>}}

\newcommand{\MRar}[1]{\stackrel{
{\rm (#1)}}{\Rightarrow}}

\begin{document}

\title{ 
\newcommand{\isitaTitle}{
Variable-Length Source Encryption under 
Mutual Information Security Criterion:
Universal Coding, Strong Converse Theorem
}
\newcommand{\arXibTitle}{
}{
A Framework of Variable-Length Source Encryption
using Mutual Information Security Criterion:
Universal Coding, Strong Converse Theorem
}

\newcommand{\ieiceTitle}{
Secure Transmission of Codewords over the 
Noiseless Channel for the Variable-Length Source 
Coding and the Strong Converse Theorem 
}

}

\author{
	\IEEEauthorblockN{Yasutada Oohama and Bagus Santoso}
	\IEEEauthorblockA{University of 
    Electro-Communications, Tokyo, Japan\\ 
	Email: \url{{oohama,santoso.bagus}@uec.ac.jp}}
}
%

\newcommand{\authorZ}
{%
	\IEEEauthorblockN{Yasutada Oohama }
	\IEEEauthorblockA{University of Electro-Communications,\\ 
    Tokyo, Japan\\ 
	Email: oohama@uec.ac.jp}
    \and
    \IEEEauthorblockN{Morikazu Hamasaki}
	\IEEEauthorblockA{
    Oki Software Co., Ltd.,\\
    Tokyo, Japan\\ 
	Email: hamasaki274@oki.com}
    \and
	\IEEEauthorblockN{Bagus Santoso}
	\IEEEauthorblockA{University of Electro-Communications,\\ 
    Tokyo, Japan\\ 
	Email: santoso.bagus@uec.ac.jp}
}

\maketitle

\begin{abstract}
In this paper we consider the variable-length 
lossless source coding for discrete memoryless sources. 
We proposes a new encryption framework for 
securely transmitting codewords over a noiseless channel. 
The proposed source encryption framework is 
based on the secure communication framework of 
the Shannon cipher system. In the proposed framework, 
we use the mutual information as a measure of 
information leakage to an adversary.
We establish the necessary and sufficient condition for 
secure communication 
under the condition that 
the information leakage is upper bounded by 
a constant $\delta\in (0,\infty)$, 
thereby providing a 
complete solution to the problem. 
We also show that the obtained necessary 
and sufficient condition does not depend 
on the constant 
$\delta \in (0,\infty)$, demonstrating 
that we have the strong converse coding theorem 
for the proposed framework 
of source encryption.
We further prove the existence
of encryption/decryption schemes, which are 
universal in the sense that they work effectively 
for any distributions of the plain text and those 
of the key used for the encryption.
\end{abstract}

\begin{IEEEkeywords}
 Source encryption, common key cryptosystem,
 variable length source coding,
 strong converse theorem 
\end{IEEEkeywords}

\section{Introduction}

Lossless source coding is widely used as 
a technique for compressing information in order 
to improve the efficiency of information 
transmission. In information transmission, 
in addition to such efficiency, communication 
security, namely, preventing information leakage
over public communication channels, is also 
an important issue. Shannon~\cite{Shannon} 
formulated cryptosystems within an 
information-theoretic framework, which 
we call Shannon cipher system or 
briefly say SCS.  For the SCS, Shannon clarified
the concept and fundamental limits of 
perfect secrecy for shared-key encryption schemes.
Since then, we have had a variety 
of studies extending SCS.  
Some important extensions were provided by 
Yamamoto~\cite{Yamamoto1}, \cite{Yamamoto2}
and Hayashi and Yamamoto~\cite{Yamamoto3}. 


In this paper we consider the variable-length 
lossless source coding for discrete memoryless sources. 
We proposes a new encryption framework for securely transmitting codewords over a noiseless channel. 
The proposed source encryption framework is based 
on the secure communication framework 
of SCS.  

In \cite{DBLP:conf/isit/OohamaS22} and
\cite{DBLP:conf/isit/OohamaS24},
Oohama and Santoso 
investigated common key cryptosystem 
involving side channel attacks.
In 
\cite{DBLP:conf/itw/OohamaS21},
\cite{DBLP:conf/isita/OohamaS22}, and 
\cite{Oohama2025} 
they further 
proposed some frame works for the distributed source  encryption and investigated necessary and 
sufficient conditions for achieving secure 
communication in the proposed frameworks. 

In this study, based on SCS \cite{Shannon} and the 
several previous works 
\cite{DBLP:conf/isit/OohamaS22},
\cite{DBLP:conf/isit/OohamaS24},
\cite{DBLP:conf/itw/OohamaS21},
\cite{DBLP:conf/isita/OohamaS22}, and 
\cite{Oohama2025} 
by Oohama and Santoso,  
we propose a framework of ``source encryption'' 
that applies encryption to a given source 
encoder and decoder. For the source encoder, 
we consider the case where a fixed-length sequence
generated from a source is transformed into 
a variable-length binary string, that is, 
the variable-length source coding. 
In the proposed framework, we use 
the mutual information as a measure of 
information leakage to an adversary.  
The maximum mutual information used in the 
previous works 
\cite{DBLP:conf/isit/OohamaS22},
\cite{DBLP:conf/isita/OohamaS22}
has an advantage that 
it provides a security measure determined solely 
by the encryption system, independent of the source 
characteristics.
This metric, however can not be applied 
to the proposed framework because 
in the proposed framework, the source encryption 
of variable-length codes must depend 
on the source statistics. 

For the proposed framework we establish 
the necessary and sufficient condition for 
secure communication 
under the condition that the 
information leakage is upper 
bounded by a constant $\delta\in (0,\infty)$, 
thereby providing a complete solution to the problem. 
We also show that the obtained necessary 
and sufficient condition does not depend 
on the constant $\delta \in (0,\infty)$, which implies  
that we have the strong converse 
coding theorem for the proposed framework 
of source encryption. Furthermore we prove 
the existence of encryption/decryption schemes, 
which are universal 
in the sense that they work effectively for any 
distributions of the plain text and those 
of the key used for the encryption.

As a parallel work with this study, the authors 
\cite{oohamaSa26FrwOfSEncUnivStConv
},  proposed a framework 
of source encryption built upon the fixed-length 
source coding and obtained the condition for reliable 
and secure communication in an explicit form. In the proofs 
of our main results we derive the sufficient 
condition by using the key results the authors obtained in \cite{oohamaSa26FrwOfSEncUnivStConv}.
For the proof of the necessary condition the 
derivation of this condition is quite 
standard, not depending on 
the work \cite{oohamaSa26FrwOfSEncUnivStConv}.

\section{
Secure Variable-Length Lossless 
Source Coding
}
\subsection{Preliminaries}
In this subsection, we show the basic notations 
and related consensus used in this paper.

\noindent 
\underline{\it Source of Information and Key:} \/ 
We first define the source. 
Let $X$ be a random variable from a finite 
set $\mathcal{X}$. Let $\{X_t\}_{t=1}^\infty$ be a 
stationary discrete memoryless 
source (DMS) such that for each $t=1,2,\ldots,$
$X_t$ takes values in the finite set $\mathcal{X}$ 
and has the same distribution as that of $X$ 
denoted by $p_X=\{p_X(x)\}_{x \in \mathcal{X}}$.
The stationary DMS 
$\{X_t\}_{t=1}^\infty$ 
is specified with $p_{X}$.
We next define the key used in a secret-key 
cryptosystem. Let $K$ be a random variable 
from the same finite set $\mathcal{X}$, having the distribution $p_{K}=\{p_K(k)\}_{k \in \mathcal{X}}$.
Let $\{K_t\}_{t=1}^\infty$ 
be a stationary DMS specified with $p_K$.

\noindent 
\underline{\it Random Variables and Sequences:} \/We write 
the sequence of random variables of length $n$ 
from the information source as follows 
$\bm{X}\coloneqq X_1X_2\ldots X_n$. 
Similarly, the strings of length $n$ 
in $\mathcal{X}^n$ are written 
as $\bm{x} \coloneqq x_1\cdots x_n\in\mathcal{X}^n$. 
For ${\bm{x}} \in \mathcal{X}^n$, 
$p_{\bm{X}}(\bm{x})$ stands for the 
probability of the occurrence of $\bm{x}$. 
When the information source is memoryless 
and specified by $p_X$, 
we have $p_{\bm{X}}(\bm{x})=
\prod_{t=1}^n p_X(x_{t})$. 
In this case, we write $p_{\bm{X}}(\bm{x})$ 
as $p_X^n(\bm{x})$. 
Similar notations are used for other random 
variables and sequences.

\noindent \underline{\textit{Other Notation:}}~Without 
loss of generality, we assume that $\mathcal{X}$ 
is a finite field.
$\oplus$ denotes addition over the field, and $\ominus$ 
denotes subtraction over the field.
As an example, for any $a,b$ in the same finite field,
we have $a \ominus b = a \oplus (-b)$.
Moreover, in this paper, all logarithms are 
taken to base 2, and $\mathbb{N}$ denotes the 
set of natural numbers.

\subsection{Basic System Description}
\noindent 
\underline{\it Source Coding without Encryption}:\/ We 
consider the variable length source coding without encryption. Let a sequence 
output from the source be denoted by $\bm{x} \in \mathcal{X}^n$ and let its encoded representation be denoted by $\underline{y}$.
Let $\displaystyle \mathcal{Y}^* := 
\bigcup_{l \in \mathbb{N}} \mathcal{Y}^l$
be the set consisting of all finite-length 
sequences composed of symbols in $\mathcal{Y}$.
Then $\underline{y} \in \mathcal{Y}^*$.
The source coding without encryption consists 
of the following three processes:    
\begin{figure}[t]
  \centering
  \includegraphics[width=60mm]{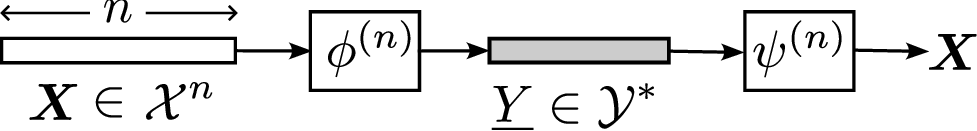}
  \caption{Source coding with variable-length codes} %
  \label{fig:SourceEnc}
 \end{figure}
\begin{itemize}
  \item[1)]{\it Encoding process:}\/ The sequence $\bm{x}$ 
  is encoded using an encoder $\phin$ defined by 
  $\phin: \mathcal{X}^n \to \mathcal{Y}^*$.
  \item[2)]{\it Transmission:}\/ The encoded sequence 
  $\underline{y}=\phin({\bm{x}})$ is transmitted 
  through a noiseless channel.
  \item[3)]{\it Decoding process:}\/ 
  The decoder $\psin$ defined by $\psin:\mathcal{Y}^* \to \mathcal{X}^n$ receives 
 $\underline{y}=\phin({\bm{x}})$ to decode the sequence $\bm{x}$ of the source output.    
\end{itemize}
The above processes of the source coding is 
shown in Fig. \ref{fig:SourceEnc}.
Throughout this paper we assume $\mathcal{Y}=\{0,1\}$.
Let $|\phin (\bm{x})|$ represents the length of the codeword 
$\phin (\bm{x}) \in \mathcal{Y}^*$. Define $l_{\max}$ by 
\begin{align*}
  l_{\max} \coloneqq \max_{ \bm{x} \in \mathcal{X}^n} 
  |\phin(\bm{x})|.
\end{align*}  
For each $1\leq l\leq l_{\max}$, define  
$$
\mathcal{D}_l \coloneq \{\bm{x}:~\psin \circ \phinx = \bm{x},~|\phin (\bm{x})| = l \},
 $$ 
which 
represents a set of all sequences in $\mathcal{X}^n$ which are mapped to codewords of length $l$ and are correctly decoded.
We set
$\underline{\cal D}\coloneqq$ 
$\{\mathcal{D}_l\}_{l=1}^{l_{\max}}$ and call it 
the uniquely decodable set. 
\newcommand{\OmiTTFig}{
\begin{figure}[t]
  \centering
  \includegraphics[width=70mm]{D_shazou.eps}
  \caption{Relationship between 
  $(\phin,\psin)$ and 
  $\mathcal{D}_l,1\leq l\leq l_{\max}$.}\label{fig:D_shazou}
\end{figure}
}
%
By the above definition 
we have the following: 
  \begin{align*}
      \mathcal{D}_l \cap \mathcal{D}_{l'}                         &= \emptyset,~1\leq l \neq l'\leq l_{\max}, 
      \:\: \bigcup_{l=1}^{l_{\max}} 
      \mathcal{D}_l= \mathcal{X}^n.
  \end{align*}
Define the random variable 
$L^{\prime}=L^{\prime}_{\phin}(\bm{X})$ 
representing the codeword length
of $\phin(\bm{X})$ and its expectation 
$\overline{L}^{\prime}_{\phin}$ 
by \begin{align*}
L^{\prime}_{\phin}(\bm{X})
\coloneqq |\phinX|,\quad 
\overline{L}^{\prime}_{\phin} 
\coloneqq \expectation[L^{\prime}_{\phin}(\bm{X})].
\end{align*}
\newcommand{\OmiTTLemma}{
For source coding systems without encryption, 
we have the following lemma.
\begin{lemma} \label{lem:scs_tilde}
  For a given $(\phin,\psin)$, we have 
  the following:
  \begin{itemize}
    \item [a)] $|\mathcal{D}_l| 
      \leq 2^l,~l=1,2,\cdots, l_{\max}$,
    \item [b)] 
    There exists $(\widetilde{\phi}^{(n)},
    \widetilde{\psi}^{(n)})$ such that
 \begin{align*}
& \overline{L}'_{\widetilde{\phi}^{(n)}} 
  \leq \overline{L}^{\prime}_{\phin}+1.
\\
& \overline{L}'_{\widetilde{\phi}^{(n)}} 
  \leq \sum_{l=1}^{l_{\max}}p(\mathcal{D}_l)   
  \log |\mathcal{D}_l|+1,
\\
&\overline{L}'_{\widetilde{\phi}^{(n)}} 
 \geq \sum_{l=1}^{l_{\max}}p(\mathcal{D}_l) 
 \log |\mathcal{D}_l|. 
\end{align*}
  \end{itemize}
\end{lemma}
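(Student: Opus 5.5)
For part a), I will use only the definition of $\mathcal{D}_l$. Every $\bm{x}\in\mathcal{D}_l$ satisfies $\psin\circ\phinx=\bm{x}$, so $\phin$ is injective on $\mathcal{D}_l$. Also, $\phin$ maps $\mathcal{D}_l$ into $\mathcal{Y}^l=\{0,1\}^l$. Hence $|\mathcal{D}_l|\le |\{0,1\}^l|=2^l$, and so $\log|\mathcal{D}_l|\le l$ for every $l$.

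For part b), the plan is to re-index each class $\mathcal{D}_l$ by fixed-length binary strings of the shortest possible length $m_l:=\lceil \log|\mathcal{D}_l|\rceil$. By a), $m_l\le l$. Concretely, I fix an enumeration of $\mathcal{D}_l$ and map its $j$-th element to a string of length $m_l$. The expected length then satisfies
\begin{align*}
\sum_{l} p(\mathcal{D}_l)\, m_l \;\le\; \sum_l p(\mathcal{D}_l)\log|\mathcal{D}_l| + 1 ,
\end{align*}
which is the second claimed inequality. The first inequality follows from the same display together with
\begin{align*}
\sum_l p(\mathcal{D}_l)\log|\mathcal{D}_l|\le \sum_l p(\mathcal{D}_l)\, l=\overline{L}^{\prime}_{\phin},
\end{align*}
using a) again. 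For the third inequality, I would apply a) to the new code: each of its length classes has size at most $2^{\text{length}}$, so $\overline{L}'_{\widetilde{\phi}^{(n)}}$ dominates the corresponding $\sum p\log|\cdot|$.

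\textbf{Main obstacle.} The hard step is making $\widetilde{\phi}^{(n)}$ injective, so that a decoder $\widetilde{\psi}^{(n)}$ exists. Two different $l,l'$ can give the same $m_l=m_{l'}$, and then their re-indexed codewords collide. Since the decoder sees the whole string (one-shot decoding, no prefix condition), only global injectivity is needed.

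I would first try to resolve collisions while keeping each length class of the new code no larger than needed. One option is to merge all classes sharing a common $m$ into one class. Then I would assign codewords greedily: sort sequences by decreasing probability and give the $i$-th one the $i$-th binary string in length-lexicographic order. This one-shot optimal assignment does not increase the length of any sequence relative to any other injective code, which gives the first inequality directly.

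I then need to check that the second and third inequalities survive. Ideally they hold verbatim, but this may require reading $\mathcal{D}_l$ as the decodable sets of the new code. I would verify this via the concavity estimate $\sum_l p(\mathcal{D}_l)\log|\mathcal{D}_l|\le \sum_l p(\mathcal{D}_l)\log|\widetilde{\mathcal{D}}_l|$ plus at most one bit of rounding. If the merging breaks the verbatim bound with the original $\mathcal{D}_l$, I would fall back to stating the second and third inequalities for the length classes $\widetilde{\mathcal{D}}_l$ of $\widetilde{\phi}^{(n)}$. That version follows cleanly from a) applied to $\widetilde{\phi}^{(n)}$ and from the fixed-length rounding.
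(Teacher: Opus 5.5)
Your part a) and the first construction in part b) are exactly the paper's proof. The paper sets $m_l=\lceil\log|\mathcal{D}_l|\rceil$, takes a one-to-one $\tau^{(n)}:\phin(\mathcal{D}_l)\to\{0,1\}^{m_l}$, and puts $\widetilde{\phi}^{(n)}=\tau^{(n)}\circ\phin$ on each $\mathcal{D}_l$. It then averages $\log|\mathcal{D}_l|\le m_l\le\log|\mathcal{D}_l|+1$ over $l$ and uses a) for the first bound. The paper never deals with the cross-class collisions you flag, and it never constructs $\widetilde{\psi}^{(n)}$. So your ``main obstacle'' is a real hole in its argument, and no repair can close it: if $(\widetilde{\phi}^{(n)},\widetilde{\psi}^{(n)})$ must decode correctly, b) is false. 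Let $N=|\mathcal{X}|^n>4$, enumerate $\mathcal{X}^n=\{\bm{x}_1,\ldots,\bm{x}_N\}$, and set $\phin(\bm{x}_j)=0^j$. Every $\mathcal{D}_l$ is a singleton, so the right side of the second inequality equals $1$. But for uniform $p_X$, any injective $\widetilde{\phi}^{(n)}$ gives at most one sequence length $0$ and at most two sequences length $1$, so $\overline{L}'_{\widetilde{\phi}^{(n)}}\ge(2N-4)/N>1$. In this example the paper's construction sends every $\bm{x}$ to the empty string.

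So your hope that the second and third inequalities survive ``verbatim'' must fail, and the individual steps of your repair show where.

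\emph{First inequality.} The sorted length-lexicographic code is optimal in expected length only, not sequence by sequence as you assert. What gives the first inequality is expected-length optimality together with injectivity of $\phin$, which holds because $\bigcup_l\mathcal{D}_l=\mathcal{X}^n$.

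\emph{Third inequality.} With the original $\mathcal{D}_l$ it fails for your greedy code. If $\phin$ is fixed-length, its right side is $n\log|\mathcal{X}|$, whereas as $p_X$ approaches a point mass the greedy code's expected length tends to at most $1$.

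\emph{Concavity estimate.} It compares blocks of two unrelated partitions that merely share the index $l$, and $\widetilde{\mathcal{D}}_l$ may even be empty. It is not a usable inequality.

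What can honestly be proved is one of two things. The first is the paper's version with the decoder dropped; then your first construction already suffices. The second is your fallback with a decoder. There the greedy code gives the first inequality by optimality, and the lower bound $\overline{L}'_{\widetilde{\phi}^{(n)}}\ge\sum_l p(\widetilde{\mathcal{D}}_l)\log|\widetilde{\mathcal{D}}_l|$ follows from a). The matching ``$+1$'' upper bound does not come from fixed-length rounding. It holds because every length class of the greedy code is full ($|\widetilde{\mathcal{D}}_l|=2^l$) except the longest, which holds the least probable sequences and so adds at most one bit of excess.
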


Proof of Lemma \ref{lem:scs_tilde} is given in \ref{prf:scs_tilde}.
}
\noindent 

\newcommand{\FFschme}{

}

\noindent
\underline{\it Source Encryption:}\/ We consider 
the variable-length source encryption.
We assume that the sequence and the key are independently generated from $\mathcal{S}_{\mathsf{gen}}$ and $\mathcal{K}_{\mathsf{gen}}$, respectively.
The sequence $\bm{X}$ generated from $\mathcal{S}_{\mathsf{gen}}$ and 
the key ${\bm{K}}$ generated from 
$\mathcal{K}_{\mathsf{gen}}$ 
are both sent 
to node $\mathsf{L}$. The details of the source 
encryption system are shown in Fig.~\ref{SourceEnc_2}.
\begin{figure}[t]
  \centering
  \includegraphics[width=60mm
  ]{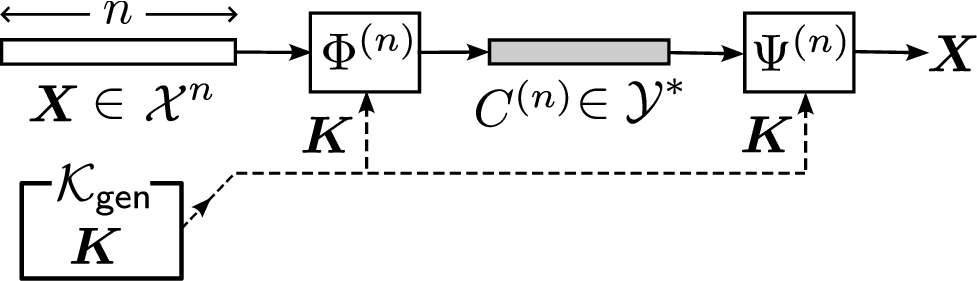}
  \caption{Source encryption with variable-length 
  codes 
  }\label{SourceEnc_2}
  \vspace*{-3mm}
\end{figure}
\begin{itemize}
  \item[1)]{\it Source Processing:} \/At node $\mathsf{L}$, $\bm{X}$ is encrypted with the key $\bm{K}$ using the encryption function
  $\Phin : \mathcal{X}^n \times \mathcal{X}^n \to \mathcal{Y}^*$.
  The ciphertext of $\bm{X}$ is given by $C^{(n)} = \Phin(\bm{K}, \bm{X})$.
  On the encryption function $\Phin$, we use the following notation:
  $ \Phin(\bm{K}, \bm{X})
    = \PhiKnX.$

\item[2)]{\it Transmission:} The ciphertext $C^{(n)}$ is sent to node $\mathsf{D}$ through the public communication channel. Meanwhile, the key $\bm{K}$ is sent to $\mathsf{D}$ through the private communication channel.
\item[3)] {\it Sink Node Processing:} \/
  At node $\mathsf{D}$, the ciphertext is decrypted 
  using the key $\bm{K}$ through the corresponding 
  decryption procedure
  $\Psin : \mathcal{X}^n \times 
  \mathcal{Y}^* \to \mathcal{X}^n$.
On the decryption function $\Psin$, we use the 
following notation:
$ \Psin(\bm{K}, C^{(n)})= \PsiKn(C^{(n)})$.
\end{itemize}
We fix $\bm{K}=\bm{k} \in \mathcal{X}^n$ arbitrary. 
For each $\bm{k}\in \mathcal{X}^n$, 
define 
\begin{align*}
\mathcal{D}_{l,\bm{k}}
\coloneq\{\bm{x}:~\Psi_{\bm{k}}^{(n)}\circ \Phiknx
=\bm{x},~|\Phi_{\bm{k}}^{(n)} (\bm{x})|=l\},
\end{align*}
which represents the set of sequences that 
are encoded into codewords of length $l$. 
For each $\bm{k}\in {\cal X}^n$, we set
$\underline{\cal D}_{\bm{k}}\coloneqq$ 
$\{\mathcal{D}_{l,\bm{k}}\}_{l=1}^{l_{\max}}$ 
and call it 
the uniquely decodable set for given 
$\bm{k}\in {\cal X}^n$.
It is obvious that 
%
$\forall \bm{k} \in \mathcal{X}^{n}$ and 
$\forall \bm{x} \in \mathcal{X}^{n}$,  
$ \Psi_{\bm{k}}^{(n)} \circ \Phiknx 
  =  \bm{x}$,
which implies that  
for each $\bm{k}\in {\cal X}^n$ and each 
$n\in \mathbb{N}$, 
$\Phi_{\bm{k}}^{(n)}:{\cal X}^n\to {\cal Y}^*$ 
is injective. For each $\bm{k}\in \mathcal{X}^n$, 
define the random variable 
$L = L_{\Phi_{\bm{k}}^{(n)}}(\bm{X})$ by
$
L_{\Phi_{\bm{k}}^{(n)}}($ $\bm{X}) \coloneqq |\PhiknX|.
$


\subsection{Average Codeword Length 
of Source Coding Systems}
For source coding without encryption, 
the average codeword length denoted by  $\overline{L}^{\prime}_{\phin}$ is 
\begin{align*}
  \overline{L}'_{\phin} 
  = \expectation[L'_{\phin}(\bm{X})] 
  = \expectation[|\phinX|].
\end{align*}
For source encryption, the average 
codeword length $\overline{L}_{\Phi_{\bm{k}}^{(n)}}$ for an arbitrary key 
$\bm{k} \in \mathcal{X}^n$ is 
\begin{align*}
 \overline{L}_{\Phi_{\bm{k}}^{(n)}} 
  =\expectation\left[
 {L}_{\Phi_{\bm{k}}^{(n)}}(\bm{X})
 \right]
 = \expectation\left[\left|\PhiknX\right|\right].
\end{align*}
We define
\begin{align}
 \overline{L}_{\Phin} 
  :=& 
  \expectation\left[
  \overline{L}_{\Phi_{\bm{K}}^{(n)}} 
\right]
 = \expectation
  \left[{L}_{\PhiKn}(\bm{X})
 \right]
\notag \\ 
 =& \sum_{\bm{k} \in \mathcal{X}^n} p_K^n(\bm{k})\expectation\left[
{L}_{\Phi_{\bm{k}}^{(n)}}(\bm{X}) \right].
\label{eqn:LtildeLaOne}
\end{align}
For the proposed source coding system 
with encryption, we have the 
following proposition:
\begin{proposition} \label{pos:Ltilde}
 We have the following two bounds:
 \begin{align}
&
\overline{L}_{\Phin}
\geq nH(X)- \log n - 
\log(2e\log |\mathcal{X}|). 
 \label{eqn:LtildeLbTwo}   
 \end{align}
 Here in the third term in the right members of  
 (\ref{eqn:LtildeLbTwo}), $e=2.718\ldots$ 
 is Nepier's constant.   
 \end{proposition}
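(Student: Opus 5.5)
The plan is to prove the bound for each fixed key $\bm{k}\in\mathcal{X}^n$ and then average over $\bm{K}$ using (\ref{eqn:LtildeLaOne}). For fixed $\bm{k}$ the map $\Phi_{\bm{k}}^{(n)}:\mathcal{X}^n\to\mathcal{Y}^*$ is injective, as noted above. Since there are at most $2^l$ binary strings of length $l$, at most $2^l$ sequences $\bm{x}$ can satisfy $|\Phiknx|=l$. Because the code is only injective, not prefix-free, Kraft's inequality is unavailable. I will replace it with a tempered Kraft-type sum.

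Write $l(\bm{x})=|\Phiknx|\ge 1$ and fix a parameter $a=1+\varepsilon$ with $\varepsilon>0$. Counting codewords by length gives
\begin{align*}
S:=\sum_{\bm{x}}2^{-a\,l(\bm{x})}\le \sum_{l\ge 1}2^{l}2^{-a l}=\frac{1}{2^{\varepsilon}-1}\le \frac{1}{\varepsilon\ln 2},
\end{align*}
where the last step uses $2^{\varepsilon}-1\ge \varepsilon\ln 2$. Let $q(\bm{x})=2^{-a l(\bm{x})}/S$. Nonnegativity of the divergence $D(p_X^n\|q)\ge 0$ gives
\begin{align*}
a\,\overline{L}_{\Phi_{\bm{k}}^{(n)}}\ge nH(X)-\log S\ge nH(X)+\log(\varepsilon\ln 2).
\end{align*}
Set $c=\log(\varepsilon\ln 2)$; it is negative for small $\varepsilon$. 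Using $1/(1+\varepsilon)\ge 1-\varepsilon$ and $c<0$, we get
\begin{align*}
\overline{L}_{\Phi_{\bm{k}}^{(n)}}\ge (nH(X)+c)(1-\varepsilon)\ge nH(X)-\varepsilon nH(X)+c.
\end{align*}

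The step I expect to need the most care is the choice of $\varepsilon$, since it must produce exactly the constant $\log(2e\log|\mathcal{X}|)$. Using $H(X)\le\log|\mathcal{X}|$, I take $\varepsilon=1/(n\log|\mathcal{X}|)$. Then $\varepsilon nH(X)\le 1$, and
\begin{align*}
c=-\log n-\log\log|\mathcal{X}|-\log\log e.
\end{align*}
Since $1/\ln 2=\log e$, the sum $1+\log n+\log\log|\mathcal{X}|+\log\log e$ equals $\log n+\log(2\log e\,\log|\mathcal{X}|)$. This appears to be exactly the paper's third term once $e$ is read as $\log e=1/\ln 2$, and I will double-check this reading. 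It is at most $\log n+\log(2e\log|\mathcal{X}|)$ in any case, because $\log e<e$. The degenerate case $|\mathcal{X}|=1$ is trivial.

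The resulting bound $\overline{L}_{\Phi_{\bm{k}}^{(n)}}\ge nH(X)-\log n-\log(2e\log|\mathcal{X}|)$ holds uniformly in $\bm{k}$. Averaging with weights $p_K^n(\bm{k})$ as in (\ref{eqn:LtildeLaOne}) then yields (\ref{eqn:LtildeLbTwo}).
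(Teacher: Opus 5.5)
Your argument is correct, and it reaches the bound by a different decomposition from the paper's. The paper fixes $\bm{k}$ and uses the chain rule to get $H(\bm{X})=H(\underline{Y}_{\bm{k}},L)\le H(L)+\overline{L}_{\Phi_{\bm{k}}^{(n)}}$. It then bounds $H(L)$ via Lemma~\ref{lem:HLupb}, i.e.\ by comparing $L$ with a geometric law whose ratio $1-\mu^{-1}$ is matched to the actual mean $\mu$. This yields $H(L)<\log(e\mu)$ and requires a case split: $\mu\ge n\log|\mathcal{X}|$ is trivial, and otherwise $\mu<n\log|\mathcal{X}|$ is substituted.

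You instead use one Gibbs inequality against $q\propto 2^{-(1+\varepsilon)l(\bm{x})}$. This $q$ is exactly a geometric length law with ratio $2^{-\varepsilon}$ times the uniform law on strings of each length, restricted to the codebook. So underneath it is the same comparison measure, but with its parameter fixed in advance from the a priori scale $n\log|\mathcal{X}|$ rather than tuned to $\mu$. Your route needs no auxiliary lemma and no case analysis. The tempered Kraft bound $\sum_{\bm{x}}2^{-(1+\varepsilon)l(\bm{x})}\le(2^{\varepsilon}-1)^{-1}$ is also a clean, reusable fact about arbitrary one-to-one codes. The price is a free parameter and a slightly worse constant: $\log(2\log e\cdot\log|\mathcal{X}|)$ against the paper's $\log(e\log|\mathcal{X}|)$. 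The paper's $e$ really is Napier's constant, and its proof does not need the factor $2$. Your observation $\log e<e$ is what closes your argument, so the alternative reading of $e$ is unnecessary.

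Two small repairs are needed. First, the intermediate step $(nH(X)+c)/(1+\varepsilon)\ge(nH(X)+c)(1-\varepsilon)$ is false when $nH(X)+c<0$ (e.g.\ $H(X)=0$). Write $\frac{nH(X)}{1+\varepsilon}+\frac{c}{1+\varepsilon}\ge nH(X)(1-\varepsilon)+c$ instead, using $nH(X)\ge 0$ and $c<0$ separately; this gives the same conclusion. Second, $|\mathcal{X}|=1$ is not a trivial case, since the right-hand side would be $+\infty$. It is simply excluded because $\mathcal{X}$ is a finite field, so $|\mathcal{X}|\ge 2$. That same fact is what makes $\varepsilon\le 1$ and hence $c<0$.
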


Proof of Proposition \ref{pos:Ltilde} 
is given in Appendix \ref{prf:pos_Ltilde}.
This proposition plays an important role 
in establishing the converse coding theorem.
\subsection{Problem Set Up}
In this subsection, we introduce a security criterion and reliability criterion. 

\noindent \underline{\it Security Criterion:} \/The adversary $\mathcal{A}$ attempts to estimate the source sequence $\bm{X} \in \mathcal{X}^n$ from the ciphertext $C^{(n)}$.
We therefore define the mutual 
information (MI) between $\bm{X}$ and $C^{(n)}$ as
$\Delta_{\mathrm{MI}}^{(n)}
$ $=\Delta_{\mathrm{MI}}^{(n)}(\Phin\mid p_X^n, p_K^n)
\coloneqq I(C^{(n)}; \bm{X}),$
and adopt it as the security criterion.

\begin{definition} \label{def:e-anzen_fv}
    We fix some positive constant $\delta_0$. 
    For a fixed $\delta \in (0,\delta_0]$,
    a quantity $R$ is $\delta$-admissible if $\exists \{(\Phin,$ $\Psin)\}^{\infty}_{n=1}$ such that
    $\forall \gamma >0$, 
    $\exists n_0=n_0(\gamma) \in \mathbb{N}$, 
    $\forall n\geq n_0$, 
  \begin{align*}
    &\frac{1}{n} \overline{L}_{\Phin} = \frac{1}{n} \expectation\left[L_{\PhiKn}(\bm{X})\right] 
    \leq R + \gamma, \\
    &\Delta_{\mathrm{MI}}^{(n)}
      (\Phin|{p_X^n},p_K^n)=I(C^{(n)};\bm{X})\leq \delta.
  \end{align*}
\end{definition}
\begin{definition}
  Let $R_{\mathrm{v,inf}}(\delta |p_X, p_K)$ denote the infimum of all $\delta$-admissible rates $R$.
  Furthermore, we define
  \begin{align*}
    R_{\mathrm{v,inf}}({p}_X, {p}_K) \coloneqq \sup_{\substack{\delta \in (0,\delta_0]}} R_{\mathrm{v,inf}}(\delta|{p}_X, {p}_K).
  \end{align*}
 Since $R_{\mathrm{v,inf}}(\delta|{p}_X,{p}_K)$ 
is monotone decreasing with respect to $\delta\in(0,\delta_0]$, 
$R_{\mathrm{v,inf}}({p}_X, {p}_K)$ 
does not depend on $\delta_0$.
\end{definition}

\section{Main Results}

Let $\mathcal{P}(\mathcal{X})$ denote 
the set of all probability distributions 
on $\mathcal{X}$. For $R \geq 0$ 
and $p_X \in \mathcal{P}(\mathcal{X})$, 
we define 
\begin{align*}
 & E_{\empty}(R|p_X) \coloneqq 
  \min_{\substack{ P \in \mathcal{P}(\mathcal{X}):  \\ 
  R  \leq H(P)}} D(P ||p_X),
 \\
 & F(R|p_K) \coloneqq 
 \min_{P \in \mathcal{P}(\mathcal{X})}
 \left\{
 [H(P)-R]^{+} + D(P||p_K)
 \right\}.
\end{align*}
Here $[a]^{+}\coloneq \max\{0,a\}$.
For the functions $E_{\empty}(R|p_X)$ and $F(R|p_K)$, 
we have the following property.
\begin{property}\label{per:FRpKFV} 
The two functions $E_{\empty}(R|p_X)$ and $F(R|p_K)$ take positive values
if and only if $H(X) \! <\! R\! <\! H(K)$.
\end{property}

We set 
\begin{align}
  \gamma_n &\coloneqq 
  \frac{1}{n}\{|\mathcal{X}|\log(n+1)
  +\log|{\cal X}|\},\:
  R_n 
  \coloneq R+\gamma_n.
 \label{eqn:ChoicRate}  
\end{align}
Note that $\gamma_n$ vanishes  as $n \to \infty$.
On the direct coding
theorem we have the following result:  
\begin{theorem}[Direct Coding Theorem]
\label{MI_FV}
$\forall R >0$, 
$\forall n\in\mathbb{N}$, 
and $\forall (p_X,p_K)  
\in {\cal P}^2({\cal X})$,
\begin{align}
 &\frac{1}{n} \overline{L}_{\Phin}
 =\frac{1}{n} 
 \expectation[L_{\PhiKn}(\bm{X})]
 \notag\\   
 &\leq R_n+\frac{1}{n}+(\log |\mathcal{X}|-R)
     (n+1)^{|\mathcal{X}|}
     2^{-nE_{\empty}(R|p_K)},
 \label{upb:avgL} \\     
&\Delta_{\mathrm{MI}}^{(n)} (\Phin|{p_X^n},p_K^n)
 =I(C^{(n)};\bm{X})
\notag\\
 & \leq \left\{
       n[E_{\empty}(R|p_X) +  \log |\mathcal{X}|]
        +\log e \right\}(n+1)^{|\mathcal{X}|}
 \notag\\
     &\qquad \times 2^{-nE_{\empty}(R|p_X)}
      +(R_n+1)|{\cal X}|(n+1)^{4|{\cal X}|}2^{-nF(R|p_K)}.
      \label{upb:vMI}
  \end{align}
\end{theorem}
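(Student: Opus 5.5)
The plan is to build the variable-length cipher by wrapping the fixed-length source encryption scheme of the parallel work \cite{oohamaSa26FrwOfSEncUnivStConv} in a one-bit flag. Everything is indexed by types, so the scheme does not depend on $(p_X,p_K)$. For every $\bm{k}\in\mathcal{X}^n$ the codeword $\Phiknx$ is a header bit followed by a body.

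\emph{Regular branch.} The header is $0$ and the body is the fixed-length ciphertext of rate $R_n=R+\gamma_n$ from \cite{oohamaSa26FrwOfSEncUnivStConv}. In that scheme, $\bm{x}$ is mapped type-by-type to an index of about $nR$ bits, and the index is masked by a universal-hash-type function of $\bm{k}$. The term $|\mathcal{X}|\log(n+1)$ in $\gamma_n$ pays for describing the type, and $\log|\mathcal{X}|$ covers the integer rounding.

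\emph{Fallback branch.} The header is $1$ and the body is an uncompressed one-time-pad ciphertext $\bm{x}\oplus\bm{k}$, of length at most $n\log|\mathcal{X}|$.

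I commit to a trigger for the fallback that depends on $\bm{k}$: it is used on the union of the key-type classes $T_P$ with $H(P)\ge R$. This is where the regular fixed-length construction is not applied. Under this choice, $\Psi^{(n)}_{\bm{k}}\circ\Phi^{(n)}_{\bm{k}}$ is the identity for every $\bm{k}$ by construction, and injectivity follows.

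\emph{Bound (\ref{upb:avgL}).} The codeword length is $1+nR_n$ in the regular branch and at most $1+n\log|\mathcal{X}|$ in the fallback. Taking the expectation over $\bm{K}$ as in (\ref{eqn:LtildeLaOne}), I will bound $\overline{L}_{\Phin}$ by
\[
nR_n+1+(n\log|\mathcal{X}|-nR)\Pr\{\text{fallback}\},
\]
and use the standard type-counting bound
\[
\Pr\{H(P_{\bm{K}})\ge R\}\le (n+1)^{|\mathcal{X}|}2^{-nE(R|p_K)}.
\]
Dividing by $n$ gives (\ref{upb:avgL}) exactly.

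\emph{Bound (\ref{upb:vMI}).} Let $B$ denote the branch indicator that is carried in the header. I will decompose
\[
I(C^{(n)};\bm{X})\le I(B;\bm{X})+I(C^{(n)};\bm{X}\mid B).
\]
The first term will be handled together with the part of the leakage caused by source sequences whose type entropy exceeds $R$. These are the sequences that the rate-$R_n$ index cannot faithfully encrypt. The event of such a source sequence has probability at most $(n+1)^{|\mathcal{X}|}2^{-nE(R|p_X)}$. On this event the leakage is at most the body length $n\log|\mathcal{X}|$, plus the divergence penalty $nE(R|p_X)$, plus $\log e$ from the usual $-\varepsilon\log\varepsilon$ estimate. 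This gives the first summand
\[
\{n[E(R|p_X)+\log|\mathcal{X}|]+\log e\}(n+1)^{|\mathcal{X}|}2^{-nE(R|p_X)}.
\]
The remaining term, $I(C^{(n)};\bm{X})$ restricted to the regular fixed-length ciphertext, is exactly the quantity that the key result of \cite{oohamaSa26FrwOfSEncUnivStConv} bounds. That bound is $(R_n+1)|\mathcal{X}|(n+1)^{4|\mathcal{X}|}2^{-nF(R|p_K)}$, where the exponent $F$ comes from the key having too little conditional entropy to mask $nR$ bits. The fallback body is a one-time pad and contributes nothing beyond what is already counted.

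\emph{Main obstacle.} The hardest step is the bookkeeping that makes the two exponents appear exactly as stated. The length penalty has to be charged to the key-type event, so that it carries $E(R|p_K)$. The leakage from source sequences that are not representable at rate $R$ has to be charged to the source-type event, so that it carries $E(R|p_X)$. It also has to be verified that conditioning on the visible header bit costs no more than these terms. First I would write $p_{C^{(n)}\bm{X}}$ as a mixture over the pairs (source type, key type). Then I would check that the fixed-length leakage bound of \cite{oohamaSa26FrwOfSEncUnivStConv} still holds after this conditioning. If the trigger on the key type breaks that bound, I would add a second flag so that both type events are revealed explicitly, and absorb the extra $O(1)$ bits into the $1/n$ term.
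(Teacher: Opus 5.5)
\textbf{The gap is your fallback trigger.} Because the trigger depends only on the type $P_{\bm{K}}$ of the key, for every $\bm{k}$ in your regular branch \emph{all} of $\mathcal{X}^n$ goes through the rate-$R_n$ fixed-length ciphertext. That maps $|\mathcal{X}|^n$ inputs into roughly $2^{nR_n}$ codewords, which is fewer than $|\mathcal{X}|^n$ once $R<\log|\mathcal{X}|$ and $n$ is large. So $\Phi^{(n)}_{\bm{k}}$ cannot be injective: the rate-$R_n$ index is one-to-one only on a set such as $\mathcal{C}^n(R)$ of Proposition~\ref{pos:Kobayashi}. Your claim that $\Psi^{(n)}_{\bm{k}}\circ\Phi^{(n)}_{\bm{k}}$ is the identity ``by construction'' is therefore false.

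The trigger is also backwards for compression. In the regime $H(X)<R<H(K)$, where the theorem is actually used, $H(P_{\bm{K}})\ge R$ holds with probability tending to one. You would fall back almost always, and $\frac{1}{n}\overline{L}_{\Phi^{(n)}}$ would tend to about $\log|\mathcal{X}|$. Your type-counting bound is valid but vacuous there, since $E(R|p_K)=0$. The leakage accounting does not match the scheme either. With $B$ a function of $\bm{K}$ you get $I(B;\bm{X})=0$, so nothing in your scheme produces the $E(R|p_X)$ summand you describe. Moreover, conditioning on a key event breaks the independence step (the entropy of the mask given $\bm{X}$ and the flag equals $H(\widetilde{K}^m)$) that is needed to invoke Proposition~\ref{pos:upb_Max-MI}.

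The $E(R|p_K)$ in \eqref{upb:avgL} that you were trying to reproduce is a misprint for $E(R|p_X)$. The paper's own derivation bounds the length penalty by \eqref{eqn:ThOnePrfaa}. The literal bound fails for every lossless scheme: take $p_X$ uniform and $H(K)<R<\log|\mathcal{X}|$. Then the right side of \eqref{upb:avgL} tends to $R$, whereas Proposition~\ref{pos:Ltilde} forces $\frac{1}{n}\overline{L}_{\Phi^{(n)}}\ge\log|\mathcal{X}|-O(\frac{\log n}{n})$.

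The fix is to trigger on the source, which is what the paper does.
\begin{itemize}
\item Each $\bm{x}\in\mathcal{C}^n(R)$ is sent as $\varphi^{(n)}(\bm{k})\oplus\widetilde{\phi}^{(n)}(\bm{x})$ in $\lceil nR_n\rceil$ bits. This works because $\mathcal{C}^n(R)$ embeds injectively into $\mathcal{X}^m$ by the choice of $\gamma_n$.
\item Every other $\bm{x}$ is sent as $\bm{k}\oplus\bm{x}$ in $\lceil n\log|\mathcal{X}|\rceil$ bits.
\item The codeword length itself serves as the flag. A separate header bit would cost an extra $1/n$ against the stated constant.
\end{itemize}
Then each $\Phi^{(n)}_{\bm{k}}$ is injective, and the length penalty is governed by $q=\Pr\{\bm{X}\notin\mathcal{C}^n(R)\}$. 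Because the flag $L'$ is a function of $\bm{X}$, one has $I(C^{(n)};\bm{X})=I(C^{(n)};L')+I(C^{(n)};\bm{X}\mid L')$, and the three contributions are bounded as follows.
\begin{itemize}
\item $I(C^{(n)};L')\le h(q)$, which Lemma~\ref{lem:logOneMinuSq} handles.
\item The fallback branch contributes at most $q\,n\log|\mathcal{X}|$. A one-time pad under a non-uniform key does leak; this is the $n\log|\mathcal{X}|$ inside the first summand of \eqref{upb:vMI}.
\item The regular branch contributes at most $m\log|\mathcal{X}|-H(\widetilde{K}^m)$, because the conditioning event involves only $\bm{X}$, which is independent of $\bm{K}$. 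This is exactly the quantity Proposition~\ref{pos:upb_Max-MI} bounds.
\end{itemize}
Apart from the trigger and the header bit, your two-branch architecture and flag-first decomposition are the paper's.
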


Proof of Theorem \ref{MI_FV} is given 
in Section \ref{subsec:ThmDirect}.
From Theorem~\ref{MI_FV} 
and Property \ref{per:FRpKFV}, we have 
the following corollary: 
\begin{corollary}\label{cor:directThA}
$\forall R>0$,
$\exists \{(\Phin,\Psin)$$\}_{n=1}^{\infty}$ 
such that 
$\forall (p_X,$ $p_K)\in {\cal P}^2({\cal X})$ 
with $H(X)<R< H(K)$, 
\begin{align}
\left.
\begin{array}{l}
{\displaystyle \limsup_{n\to\infty}} 
 (1/n)\overline{L}_{\Phin}\leq R, 
\vspace{1mm}\\
{\displaystyle \liminf_{n\to\infty}}
(-1/n) 
\log \MI(\Phin|{p_X^n},p_K^n)
\\ 
\geq \min\{E(R|p_X), F(R|p_K)\}>0.
\end{array}
\right\}
\label{eqn:ThOneBound}
\end{align}
\end{corollary}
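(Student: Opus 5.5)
We derive Corollary \ref{cor:directThA} directly from Theorem \ref{MI_FV}. We fix $R>0$ and take the sequence $\{(\Phin,\Psin)\}_{n=1}^\infty$ whose existence Theorem \ref{MI_FV} gives for every $n$. The bounds (\ref{upb:avgL}) and (\ref{upb:vMI}) hold for all $(p_X,p_K)\in{\cal P}^2({\cal X})$, and the code is universal, i.e., it depends only on $R$ and $n$. Hence one fixed sequence serves all source/key pairs satisfying $H(X)<R<H(K)$, and the quantifier order in the corollary is respected. We then fix such a pair and take limits.

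For the rate, we use (\ref{upb:avgL}). By (\ref{eqn:ChoicRate}), $R_n=R+\gamma_n\to R$ and $1/n\to 0$. The last term is $(\log|{\cal X}|-R)(n+1)^{|{\cal X}|}2^{-nE(R|p_K)}$. Here I read the exponent as the error exponent of the key with respect to $R$. The relevant positivity is that $E(R|p_K)>0$ whenever $R<H(K)$, while $E(R|p_X)>0$ needs $R>H(X)$; Property \ref{per:FRpKFV} packages both as holding when $H(X)<R<H(K)$. The polynomial factor $(n+1)^{|{\cal X}|}$ is subexponential, so the term tends to $0$. If the exponent happened to vanish, the bound would still only add $O((n+1)^{|{\cal X}|})$ and fail. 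So this is the one place where the hypothesis is essential, and I will check from the definition that $E(R|p_K)>0$ for $R<H(K)$. This follows by compactness: the constraint set $\{P: H(P)\ge R\}$ is closed and excludes $p_K$ because $H(p_K)>R$, so the minimum of $D(P\|p_K)$ over it is positive. Therefore $\limsup_{n}(1/n)\overline{L}_{\Phin}\le R$.

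For the leakage, we write the right side of (\ref{upb:vMI}) as $a_n+b_n$, where
\begin{align*}
a_n&=\{n[E(R|p_X)+\log|{\cal X}|]+\log e\}(n+1)^{|{\cal X}|}2^{-nE(R|p_X)},\\
b_n&=(R_n+1)|{\cal X}|(n+1)^{4|{\cal X}|}2^{-nF(R|p_K)}.
\end{align*}
Both prefactors are polynomial in $n$, since $R_n$ is bounded. Hence $\liminf_n(-1/n)\log a_n\ge E(R|p_X)$ and $\liminf_n(-1/n)\log b_n\ge F(R|p_K)$. Using $a_n+b_n\le 2\max\{a_n,b_n\}$, we get
\[
-\frac1n\log \MI\ge \min\Bigl\{-\frac1n\log a_n,\,-\frac1n\log b_n\Bigr\}-\frac1n .
\]
This yields the exponent $\min\{E(R|p_X),F(R|p_K)\}$. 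By Property \ref{per:FRpKFV} this minimum is strictly positive under $H(X)<R<H(K)$.

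The main obstacle is only bookkeeping. We need the polynomial factors to be negligible on the exponential scale, and the positivity from Property \ref{per:FRpKFV} to be invoked for the exponents actually appearing in the bounds. Nothing beyond Theorem \ref{MI_FV} and Property \ref{per:FRpKFV} should be required.
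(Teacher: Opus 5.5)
Your overall plan matches the paper's (implicit) argument: read the corollary off Theorem~\ref{MI_FV} as $n\to\infty$. This relies on the construction depending only on $(R,n)$, on polynomial prefactors being negligible on the exponential scale, and on Property~\ref{per:FRpKFV} for positivity. Your treatment of the leakage term, splitting it as $a_n+b_n$ and using $a_n+b_n\le 2\max\{a_n,b_n\}$, is correct.

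The rate step, however, contains a genuine error. You claim $E(R|p_K)>0$ whenever $R<H(K)$, on the grounds that $\{P:H(P)\ge R\}$ excludes $p_K$ because $H(p_K)>R$. The membership is reversed. $H(p_K)>R$ means $p_K$ \emph{belongs} to that set, so $E(R|p_K)\le D(p_K\|p_K)=0$. In general $E(R|p)>0$ if and only if $R>H(p)$. This is also why your sentence asserting positivity of the same function $E(R|\cdot)$ both for $R<H(K)$ and for $R>H(X)$ cannot be right. Property~\ref{per:FRpKFV} concerns $E(R|p_X)$ and $F(R|p_K)$, not $E(R|p_K)$.

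Consequently, under $H(X)<R<H(K)$, the last term of (\ref{upb:avgL}) taken literally equals $(\log|{\cal X}|-R)(n+1)^{|{\cal X}|}$. This diverges, which is exactly the failure mode you flagged yourself.

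The way out is to recognize that the $p_K$ in (\ref{upb:avgL}) is a typo. In the proof of Theorem~\ref{MI_FV}, step (a) of the length chain invokes (\ref{eqn:ThOnePrfaa}), namely $\Pr\{\bm{X}\in{\cal X}^n-{\cal C}^n(R)\}\le(n+1)^{|{\cal X}|}2^{-nE(R|p_X)}$. Structurally the key cannot enter either: the codeword length is $\lceil nR_n\rceil$ or $\lceil n\log|{\cal X}|\rceil$ according to whether $\bm{X}\in{\cal C}^n(R)$, an event that involves only $\bm{X}$.

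With the correct exponent $E(R|p_X)$, your compactness argument does work. Since $H(p_X)<R$, $p_X$ lies outside the closed set $\{P:H(P)\ge R\}$, which is nonempty because $R<H(K)\le\log|{\cal X}|$. Hence $E(R|p_X)>0$ and the extra term vanishes, giving $\limsup_n(1/n)\overline{L}_{\Phin}\le R$. So the rate bound uses the hypothesis $H(X)<R$, while $R<H(K)$ enters only through $F(R|p_K)>0$ in the leakage exponent.
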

By Corollary \ref{cor:directThA}, under 
$H(X) <R<H(K)$,  
we have the followings: 
\begin{itemize}
\item[1.] 
The average 
length of codewords per symbol 
$(1/n)\overline{L}_{\Phin}$ 
is asymptotically
upper bounded by $R$.
\item[2.] On the security, 
$\Delta_{\rm MI}^{(n)}
(\Phi^{(n)}$$|p_{X}^n,p_{K}^n)$ vanishes exponentially 
as $n\to \infty$, and its exponent 
is lower bounded by $\min\{E_{\empty}(R|p_X),
                           F(R|p_{K})\}$.
\item[3.] The code that attains 
the exponent function $\min\{E_{\empty}($ $R|p_X),
                           F(R|p_{K})\}$
is the universal code not depending on 
$(p_{X},p_K)\in {\cal P}^2({\cal X})$.
\end{itemize}

\noindent

We define the following quantity.
\begin{align*}\label{def:Rast2}
   R^{\ast}(p_X, p_K) 
   =
  \begin{cases}
    H(X) &\mbox{ if } H(X) < H(K), \\
    +\infty    &\mbox{ otherwise}.
  \end{cases}
  \stepcounter{equation}\tag{\theequation}
\end{align*}
From the definition of $R^{\ast}(p_X, p_K)$, 
Corollary \ref{cor:directThA}, 
and Property~\ref{per:FRpKFV}, 
we have the following corollary: 
\begin{corollary} \label{cor:directTh_FV}
  For $\delta \in (0,\delta_0]$, we have
  \begin{align*}
    R_{\mathrm{v,inf}}(\delta|p_X,p_K) 
    \leq R_{\mathrm{v,inf}}(p_X,p_K) 
    \leq R^{\ast}(p_X,p_K).
  \end{align*}
\end{corollary}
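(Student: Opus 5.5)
The first inequality, $R_{\mathrm{v,inf}}(\delta|p_X,p_K)\leq R_{\mathrm{v,inf}}(p_X,p_K)$, needs no work. By definition, $R_{\mathrm{v,inf}}(p_X,p_K)$ is the supremum of $R_{\mathrm{v,inf}}(\delta'|p_X,p_K)$ over $\delta'\in(0,\delta_0]$, so it dominates each member of that family, in particular the one at the given $\delta$. The substance is therefore the second inequality, $R_{\mathrm{v,inf}}(p_X,p_K)\leq R^{\ast}(p_X,p_K)$. I will split into two cases according to the definition of $R^{\ast}$.

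\emph{Case $H(X)\geq H(K)$.} Here $R^{\ast}(p_X,p_K)=+\infty$, so the inequality holds trivially.

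\emph{Case $H(X)<H(K)$.} Here $R^{\ast}(p_X,p_K)=H(X)$. It suffices to show that for every $\delta\in(0,\delta_0]$ and every $R$ with $H(X)<R<H(K)$, the rate $R$ is $\delta$-admissible. Then $R_{\mathrm{v,inf}}(\delta|p_X,p_K)\leq R$ for all such $R$. Letting $R\downarrow H(X)$ gives $R_{\mathrm{v,inf}}(\delta|p_X,p_K)\leq H(X)$ for every $\delta$, and taking the supremum over $\delta\in(0,\delta_0]$ yields $R_{\mathrm{v,inf}}(p_X,p_K)\leq H(X)$.

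To establish $\delta$-admissibility, fix such an $R$ and take the sequence $\{(\Phin,\Psin)\}_{n=1}^{\infty}$ supplied by Corollary~\ref{cor:directThA}.
\begin{itemize}
\item[1.] The first line of (\ref{eqn:ThOneBound}) gives $\limsup_{n\to\infty}(1/n)\overline{L}_{\Phin}\leq R$. Hence for every $\gamma>0$ there is $n_1(\gamma)$ such that $(1/n)\overline{L}_{\Phin}\leq R+\gamma$ for all $n\geq n_1(\gamma)$.
\item[2.] The second line of (\ref{eqn:ThOneBound}) states that the exponent of $\MI$ is at least $\min\{E(R|p_X),F(R|p_K)\}$, which is strictly positive by Property~\ref{per:FRpKFV}. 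Consequently $\MI(\Phin|p_X^n,p_K^n)\to 0$, and there is $n_2$, independent of $\gamma$, such that $\MI\leq\delta$ for all $n\geq n_2$.
\item[3.] Setting $n_0(\gamma)=\max\{n_1(\gamma),n_2\}$ verifies both conditions of Definition~\ref{def:e-anzen_fv}.
\end{itemize}

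One formal point arises here. Definition~\ref{def:e-anzen_fv} requires a code for every $n$, and the security inequality is only required for $n\geq n_0$. Any codes used for small $n$ are therefore irrelevant, and the code from Theorem~\ref{MI_FV} exists for all $n$ in any case.

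I expect no real obstacle, since the argument is bookkeeping with limits. The one step that needs care is checking that the strict positivity of the exponent gives convergence of $\MI$ to $0$ rather than merely boundedness. For this I will make explicit that $\liminf(-1/n)\log\MI\geq c>0$ implies $\MI\leq 2^{-nc/2}$ for all sufficiently large $n$, and hence $\MI\leq\delta$ eventually.
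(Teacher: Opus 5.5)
Your proof is correct and follows essentially the same route as the paper: you use Corollary~\ref{cor:directThA} together with Property~\ref{per:FRpKFV} to show that every $R\in(H(X),H(K))$ is $\delta$-admissible for every $\delta\in(0,\delta_0]$, and then pass to the infimum over $R$ and the supremum over $\delta$. Your version is slightly tidier. You take $R\downarrow H(X)$ only at the level of the infimum, so you never need the endpoint $H(X)$ itself to be admissible, which the paper asserts without the diagonal argument it would require. You also handle the case $H(X)\geq H(K)$ explicitly.
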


\textit{Proof:}
Choose $R$ such that $H(X) < R < H(K)$. 
Since $H(X)<H(K)$, 
this choice of $R>0$ is possible.   
Then, for any $\tau \in \bigl(0,\min{R-[H(X), H(K)-R}\bigr]$, we have the following inequality:
\begin{align}\label{AA_FV}
H(X) + \tau \leq R \leq H(K) - \tau.
\end{align}
By Property \ref{per:FRpKFV}, for any $R$ satisfying \eqref{AA_FV}, both $E_{\empty}(R|p_X)$ and $F(R|p_K)$ take positive values.
Then it follows from 
Corollary \ref{cor:directThA}
that $\exists 
\{(\Phin,\Psin)\}_{n =1}^{\infty}$ 
such that we have 
the bound (\ref{eqn:ThOneBound}) 
in this corollary, implying 
that for any $\delta$ 
$\in (0,\delta_0]$, 
every $R$ satisfying \eqref{AA_FV} 
is $\delta$-admissible.
Since $\tau > 0$ 
in \eqref{AA_FV} 
can be chosen arbitrarily small, it follows 
that under $H(X)<H(K)$, every $R$ satisfying 
$H(X)\leq R\leq H(K)$
is $\delta$-admissible.
Combining this fact with the definition of 
$R^{\ast}(p_X,p_K)$ given by \eqref{def:Rast2}, 
we have that 
for any $\delta$ 
$\in (0,\delta_0]$, 
$R^{\ast}(p_X,p_K)$ is 
$\delta$-admissible. 
Since $\delta$ can arbitrary 
be close to $0$, we conclude 
that $R^{\ast}(p_X,p_K)$ 
$\geq R_{\rm v,inf}(p_X,p_K)$.   
\hfill\IEEEQED

We next describe a result on the 
converse coding theorem. 
To this end we set    
\begin{align*}\label{def:Rast3}
   R^{\star}(p_X, p_K) 
   =
  \begin{cases}
    H(X) &\mbox{ if } H(X) \leq H(K), \\
    +\infty         &\mbox{ otherwise}.
  \end{cases}
  \stepcounter{equation}\tag{\theequation}
\end{align*}
On a lower bound of
$R_{\mathrm{v,inf}}(\delta |{p}_X, {p}_K), \delta \in (0,\delta_0]$, we have the following result:  
\begin{theorem}[Converse Coding Theorem] 
\label{th:converseTh_FV}
  For $\delta \in (0,\delta_0]$, 
   $ R^{\star}(p_X, p_K) 
    \leq R_{\mathrm{v,inf}}(\delta|p_X, p_K) \leq R_{\mathrm{v,inf}}(p_X, p_K).$ 
\end{theorem}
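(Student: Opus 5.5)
The plan is to treat the two inequalities separately. The right inequality $R_{\mathrm{v,inf}}(\delta|p_X,p_K)\leq R_{\mathrm{v,inf}}(p_X,p_K)$ is immediate, since $R_{\mathrm{v,inf}}(p_X,p_K)$ is defined as the supremum of $R_{\mathrm{v,inf}}(\delta|p_X,p_K)$ over $\delta\in(0,\delta_0]$. For the left inequality I fix $\delta\in(0,\delta_0]$ and split according to the two cases in the definition (\ref{def:Rast3}) of $R^{\star}(p_X,p_K)$. In each case it suffices to show that every $\delta$-admissible $R$ satisfies $R\geq R^{\star}(p_X,p_K)$. The argument should use the bound $\delta$ only as a fixed finite constant, so the resulting lower bound will not depend on $\delta$; this is the strong converse claim.

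\emph{Case $H(X)\leq H(K)$.} Let $R$ be $\delta$-admissible with codes $\{(\Phin,\Psin)\}$. The security condition is not needed here. Proposition~\ref{pos:Ltilde} gives, for every $n$,
\begin{align*}
nH(X)-\log n-\log(2e\log|\mathcal{X}|)\leq \overline{L}_{\Phin}\leq n(R+\gamma)
\end{align*}
for all $n\geq n_0(\gamma)$. Dividing by $n$ and letting $n\to\infty$ yields $H(X)\leq R+\gamma$. Since $\gamma>0$ is arbitrary, $R\geq H(X)=R^{\star}(p_X,p_K)$.

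\emph{Case $H(X)>H(K)$.} Here I must show that no $R$ is $\delta$-admissible, so that the infimum over an empty set is $+\infty$. The key step is a linear lower bound on the leakage that holds for any injective-per-key scheme. By independence and memorylessness, $H(\bm{X})=nH(X)$. Since $\Psi^{(n)}_{\bm{k}}\circ\Phi^{(n)}_{\bm{k}}$ is the identity for every $\bm{k}$, $\bm{X}$ is a function of $(\bm{K},C^{(n)})$, so $H(\bm{X}|C^{(n)},\bm{K})=0$. Hence
\begin{align*}
H(\bm{X}|C^{(n)})&\leq H(\bm{X},\bm{K}|C^{(n)})
=H(\bm{K}|C^{(n)})+H(\bm{X}|C^{(n)},\bm{K})\\
&\leq H(\bm{K})=nH(K),
\end{align*}
and therefore
\begin{align*}
I(C^{(n)};\bm{X})\geq n\bigl(H(X)-H(K)\bigr).
\end{align*}
This bound holds for every $n$, independently of the code's length. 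The right side tends to $+\infty$, so it exceeds $\delta$ for all large $n$, contradicting the second requirement of Definition~\ref{def:e-anzen_fv}. Thus no rate is $\delta$-admissible, and $R_{\mathrm{v,inf}}(\delta|p_X,p_K)=+\infty=R^{\star}(p_X,p_K)$.

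I do not expect a serious obstacle. The length bound is supplied by Proposition~\ref{pos:Ltilde}, and the leakage bound is a Shannon-type key-equivocation argument. The only points needing care are checking that the decodability assumption (injectivity of $\Phi^{(n)}_{\bm{k}}$ for each $\bm{k}$, stated earlier) indeed makes $\bm{X}$ a deterministic function of $(\bm{K},C^{(n)})$, and stating the convention that an infimum over the empty set is $+\infty$.
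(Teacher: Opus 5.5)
Your proposal is correct and follows essentially the same route as the paper. Proposition~\ref{pos:Ltilde} yields $R\geq H(X)$, and the key-equivocation chain $I(C^{(n)};\bm{X})\geq H(\bm{X})-H(\bm{K}|C^{(n)})\geq n[H(X)-H(K)]$, which uses that $(\bm{K},C^{(n)})$ determines $\bm{X}$, rules out any $\delta$-admissible rate when $H(X)>H(K)$. The differences are organizational: you case-split on the sign of $H(X)-H(K)$, whereas the paper derives both $R\geq H(X)$ and $H(K)\geq H(X)$ directly from admissibility, and you also make explicit the right inequality and the empty-infimum convention, which the paper leaves implicit.
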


Proof of Theorem \ref{th:converseTh_FV}  
is given in Section \ref{subsec:ThmConv}.
From Corollary \ref{cor:directTh_FV} 
and Theorem \ref{th:converseTh_FV}, 
we obtain the following theorem:
\begin{theorem} \label{th:codingTh_FV}
Consider the case of $H(X)<H(K)$. In this case 
we have  
$
R^{\star}(p_X, p_K)=R^{\ast}(p_X, p_K)=H(X).
$
Hence for any $\delta \in (0,\delta_0]$, 
    $R_{\mathrm{v,inf}}(p_X,p_K) 
    = R_{\mathrm{v,inf}}(\delta|p_X,p_K)=H(X).$
\end{theorem}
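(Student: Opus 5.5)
The statement follows by combining the two bounds already in hand, so the plan is to evaluate both ends of the chain under the hypothesis $H(X)<H(K)$ and show they coincide.

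First, $R^{\ast}$ and $R^{\star}$ agree here. Since $H(X)<H(K)$, the definition of $R^{\ast}(p_X,p_K)$ selects its first branch, giving $R^{\ast}(p_X,p_K)=H(X)$. The same inequality implies $H(X)\leq H(K)$, so the definition of $R^{\star}(p_X,p_K)$ also selects its first branch, giving $R^{\star}(p_X,p_K)=H(X)$.

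Next, fix any $\delta\in(0,\delta_0]$ and chain the converse with the achievability. Theorem \ref{th:converseTh_FV} gives
\begin{align*}
H(X)=R^{\star}(p_X,p_K)\leq R_{\mathrm{v,inf}}(\delta|p_X,p_K)\leq R_{\mathrm{v,inf}}(p_X,p_K).
\end{align*}
Corollary \ref{cor:directTh_FV} gives
\begin{align*}
R_{\mathrm{v,inf}}(\delta|p_X,p_K)\leq R_{\mathrm{v,inf}}(p_X,p_K)\leq R^{\ast}(p_X,p_K)=H(X).
\end{align*}
Putting these together, $H(X)\leq R_{\mathrm{v,inf}}(\delta|p_X,p_K)\leq R_{\mathrm{v,inf}}(p_X,p_K)\leq H(X)$. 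Every inequality in this chain is therefore an equality, so $R_{\mathrm{v,inf}}(\delta|p_X,p_K)=R_{\mathrm{v,inf}}(p_X,p_K)=H(X)$. The value does not depend on $\delta$, which is the strong converse claim.

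There is no real obstacle in this step; the substantive work lives in Corollary \ref{cor:directTh_FV} and Theorem \ref{th:converseTh_FV}. The only point to check is that the strict hypothesis $H(X)<H(K)$ is strong enough for both definitions. It is needed to use the first branch of $R^{\ast}$, and it automatically satisfies the weaker condition $H(X)\leq H(K)$ that defines $R^{\star}$.
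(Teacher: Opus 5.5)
Your proposal is correct and follows the paper's own route. The paper obtains this theorem directly from Corollary \ref{cor:directTh_FV} and Theorem \ref{th:converseTh_FV}: it sandwiches $R_{\mathrm{v,inf}}(\delta|p_X,p_K)$ and $R_{\mathrm{v,inf}}(p_X,p_K)$ between $R^{\star}(p_X,p_K)$ and $R^{\ast}(p_X,p_K)$, and both of these equal $H(X)$ when $H(X)<H(K)$.
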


Theorem \ref{th:codingTh_FV} implies the 
strong converse property for 
$R_{\mathrm{v,inf}}($ $\delta |{p}_X, {p}_K),
\delta \in (0,\delta_0]$.

\section{Proofs of Theorems 
\ref{MI_FV} and \ref{th:converseTh_FV}}
\subsection{Proof of Theorem~\ref{MI_FV}}
\label{subsec:ThmDirect}
In this subsection, we give the proof of 
Theorem~\ref{cor:directTh_FV}. 
We propose a source encryption 
scheme to prove Theorem~\ref{MI_FV}.
We first state some choices of parameters 
we set for the construction of source coding and 
encryption scheme.      

\noindent
\underline{\it Choices of Some Parameters:}\/ 
We choose $m\in \mathbb{N}$ such that
\begin{align}
  m = 
  \left\lfloor\frac{nR_n}{\log 
  |\mathcal{X}|}\right\rfloor, \: R_n=R+\gamma_n.
\label{eqn:ChoosEm}
\end{align}
where $\lfloor a \rfloor$ stands for the integer 
part of $a\in \mathbb{R}$. 
The choice (\ref{eqn:ChoosEm}) of $m$ implies 
the following: 
\begin{align}\label{Rconstraint}
 \frac{m}{n} \log |\mathcal{X}|\leq R_n  
 \leq \frac{m+1}{n}\log{|\mathcal{X}|}. 
\end{align}

We present two propositions necessary for the proof 
of Theorem~\ref{MI_FV}. 
Under the choice (\ref{eqn:ChoosEm}) of $m$, we have 
the following two propositions:  

\begin{proposition} \label{pos:Kobayashi}
$\exists \{ \mathcal{C}_{\empty}^{n}(R)\}_{n=1}^{\infty}$ 
with $\mathcal{C}_{\empty}^{n}(R) 
    \subseteq \mathcal{X}^n$ such 
that $\forall n \in \mathbb{N}$ 
and $\forall p_X\in {\cal P}({\cal X})$, 
\begin{align}
& |\mathcal{C}^{n}(R)| 
  \leq (n+1)^{|\mathcal{X}|} 2^{nR},
 \label{eqn:calCupb} \\
& \Pr\{{\bm X}\in \mathcal{X}^n 
                 -\mathcal{C}_{\empty}^{n}(R)\} 
  \leq (n+1)^{|\mathcal{X}|}2^{-nE_{\empty}(R|p_X)}.
 \label{eqn:ErrorUpb}
\end{align}
\end{proposition}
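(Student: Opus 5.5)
We take $\mathcal{C}^{n}(R)$ to be the union of all type classes whose empirical entropy is at most $R$:
\begin{align*}
\mathcal{C}^{n}(R) \coloneqq \bigcup_{P\in\mathcal{P}_n(\mathcal{X}):\, H(P)\leq R} T_P^n ,
\end{align*}
where $\mathcal{P}_n(\mathcal{X})$ is the set of types of sequences in $\mathcal{X}^n$ and $T_P^n$ is the type class of $P$. This set depends only on $n$ and $R$, not on $p_X$, which gives the required universality. Both bounds then follow from the standard method of types.

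\emph{The cardinality bound (\ref{eqn:calCupb}).} Two standard facts suffice:
\begin{itemize}
\item the number of types satisfies $|\mathcal{P}_n(\mathcal{X})|\leq (n+1)^{|\mathcal{X}|}$;
\item each type class satisfies $|T_P^n|\leq 2^{nH(P)}$.
\end{itemize}
Every type in the union has $H(P)\leq R$, so
\begin{align*}
|\mathcal{C}^{n}(R)| \leq (n+1)^{|\mathcal{X}|}\, 2^{nR}.
\end{align*}

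\emph{The error bound (\ref{eqn:ErrorUpb}).} The complement $\mathcal{X}^n-\mathcal{C}^{n}(R)$ is the union of type classes with $H(P)>R$. For any $P$ we have $p_X^n(T_P^n)\leq 2^{-nD(P\|p_X)}$. Each $P$ in the complement satisfies $R\leq H(P)$, so it is feasible in the minimization defining $E(R|p_X)$, and therefore $D(P\|p_X)\geq E(R|p_X)$. Summing over at most $(n+1)^{|\mathcal{X}|}$ types yields (\ref{eqn:ErrorUpb}).

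Edge cases need a brief check. If $R\geq\log|\mathcal{X}|$ the constraint set $\{P: R\leq H(P)\}$ is nonempty only when $R=\log|\mathcal{X}|$. When it is empty, $E(R|p_X)=+\infty$, but in that case every type already lies in $\mathcal{C}^{n}(R)$, so the complement is empty and (\ref{eqn:ErrorUpb}) holds trivially. When $R=\log|\mathcal{X}|$ exactly, the only complement types would have $H(P)>\log|\mathcal{X}|$, which is impossible, so the complement is again empty. Apart from this boundary bookkeeping, which is the only mildly delicate point, the argument is routine.
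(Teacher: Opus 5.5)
Your proof is correct and is the standard method-of-types argument for this universal fixed-length code: take $\mathcal{C}^n(R)=\bigcup_{P:\,H(P)\le R}T_P^n$, then combine $|\mathcal{P}_n(\mathcal{X})|\le (n+1)^{|\mathcal{X}|}$ with $|T_P^n|\le 2^{nH(P)}$ and $p_X^n(T_P^n)\le 2^{-nD(P\|p_X)}$. The paper does not prove the proposition itself but cites its companion work for it, so there is nothing here to compare beyond this; your edge-case check is fine, and the case $D(P\|p_X)=+\infty$ needs no separate treatment because then $p_X^n(T_P^n)=0$.
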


\begin{proposition}\label{pos:upb_Max-MI}
$\exists\{\varphi^{(n)}\}_{n=1}^{\infty}$ with $\varphi^{(n)}:\mathcal{X}^n \to \mathcal{X}^m$
such that 
 $\forall n \in \mathbb{N}
 $ 
and $\forall p_K\in {\cal P}({\cal X})$, 
\begin{align*}
& m \log |\mathcal{X}| 
- H(\widetilde{K}^m)
\leq \left(R_n+\frac{1}{2}\right)(n+1)^{3|{\cal X}|}
      2^{-n[F(R|p_K)-\gamma_n]}.
\end{align*}
Here we set $\widetilde{K}^m=\varphi^{(n)}(\bm{K})$.  
\end{proposition}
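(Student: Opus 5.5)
The plan is a random-hashing (privacy amplification) argument carried out type class by type class, followed by a derandomization step that yields a single map $\varphi^{(n)}$ working for every $p_K$. First I will rewrite the quantity to be bounded as a divergence. Let $U_m$ be uniform on $\mathcal{X}^m$. Then
\[
m\log|\mathcal{X}|-H(\widetilde{K}^m)=D(p_{\widetilde{K}^m}\,\|\,U_m).
\]
Next I decompose $p_K^n=\sum_{P}p_K^n(T_P)\,u_{T_P}$, where $u_{T_P}$ is the uniform distribution on the type class $T_P$. Since $D(\cdot\|U_m)$ is convex, for every fixed map $\varphi$ this gives
\[
D(p_{\varphi(\bm{K})}\|U_m)\leq \sum_P p_K^n(T_P)\,D_P(\varphi),\qquad D_P(\varphi)\coloneqq D(p_{\varphi(\bm{K}_P)}\|U_m),
\]
where $\bm{K}_P\sim u_{T_P}$. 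Each $D_P(\varphi)$ depends only on $\varphi$ and $P$, not on $p_K$. This is what makes a universal choice possible.

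Second, I bound $D_P(\varphi)$ on average over a universal$_2$ family, namely uniformly random linear maps $\varphi:\mathcal{X}^n\to\mathcal{X}^m$ over the field $\mathcal{X}$ (random affine maps would also do). The collision-probability argument gives $\expectation_\varphi\sum_{\bm{y}}p_{\varphi(\bm{K}_P)}(\bm{y})^2\leq |T_P|^{-1}+|\mathcal{X}|^{-m}$. Because $H\geq H_2$, and by Jensen,
\[
\expectation_\varphi D_P(\varphi)\leq \log\bigl(1+|\mathcal{X}|^m/|T_P|\bigr)\leq (\log e)\,|\mathcal{X}|^m/|T_P|.
\]
Trivially $D_P(\varphi)\leq m\log|\mathcal{X}|\leq nR_n$. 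Using $|\mathcal{X}|^m\leq 2^{nR_n}$ from (\ref{Rconstraint}) and $|T_P|\geq (n+1)^{-|\mathcal{X}|}2^{nH(P)}$, I get
\[
\expectation_\varphi D_P(\varphi)\leq B_P\coloneqq c_n\,(n+1)^{|\mathcal{X}|}\,2^{-n[H(P)-R_n]^{+}},
\]
with $c_n$ of order $R_n+\tfrac12$ after normalizing. The precise constant comes from taking the minimum of the two bounds; I will tune this so that it matches the stated $(R_n+\tfrac12)$, which may require a little care.

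Third, I derandomize. Since $\expectation_\varphi\sum_P D_P(\varphi)/B_P\leq N_n\leq (n+1)^{|\mathcal{X}|}$, where $N_n$ is the number of types, there exists one $\varphi^{(n)}$ with $D_P(\varphi^{(n)})\leq (n+1)^{|\mathcal{X}|}B_P$ simultaneously for all $P$. This map does not depend on $p_K$. Plugging it into the convexity bound and using $p_K^n(T_P)\leq 2^{-nD(P\|p_K)}$ gives
\[
D(p_{\widetilde{K}^m}\|U_m)\leq c_n(n+1)^{2|\mathcal{X}|}\sum_P 2^{-n\{[H(P)-R_n]^{+}+D(P\|p_K)\}}\leq c_n(n+1)^{3|\mathcal{X}|}2^{-n\min_P\{[H(P)-R_n]^{+}+D(P\|p_K)\}}.
\]
Finally, $[H(P)-R_n]^{+}\geq [H(P)-R]^{+}-\gamma_n$, so the minimum is at least $F(R|p_K)-\gamma_n$, which is the claimed bound.

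The main obstacle is the bookkeeping of polynomial factors and constants: matching exactly $(n+1)^{3|\mathcal{X}|}$ and the prefactor $R_n+\tfrac12$, and checking that the collision bound holds for the linear family, including the zero-input issue. If the linear family causes trouble at $\bm{x}=\bm{0}$, I would use random affine maps instead, or absorb the single bad point into the type-class count.
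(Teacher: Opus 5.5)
The paper does not prove this proposition in the text. It defers the proof to the companion paper on fixed-length source encryption, so there is no in-paper argument to compare with.

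Judged on its own, your route is correct: convexity over type classes, a universal$_2$ linear hash bounded through collision entropy, and derandomization over the polynomially many types so that $\varphi^{(n)}$ does not depend on $p_K$. It also accounts for the three polynomial factors in $(n+1)^{3|\mathcal{X}|}$.

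Two points need tidying.

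\textbf{Zero input.} This concern is unfounded. For a uniformly random $m\times n$ matrix $A$ over $\mathcal{X}$ and $\bm{k}\neq\bm{k}'$, $\Pr\{A\bm{k}=A\bm{k}'\}=\Pr\{A(\bm{k}\ominus\bm{k}')=\bm{0}\}=|\mathcal{X}|^{-m}$. This is all the collision bound uses, so linear maps suffice and you never need uniformity of $A\bm{k}$ itself.

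\textbf{The constant.} This does need the care you anticipated. With $N_n\le(n+1)^{|\mathcal{X}|}$ and $|T_P|\ge(n+1)^{-|\mathcal{X}|}2^{nH(P)}$, your prefactor is $\max\{\log e,\,nR_n(n+1)^{-|\mathcal{X}|}\}$. That exceeds $R_n+\frac{1}{2}$ whenever $R_n<\log e-\frac{1}{2}$, which happens for small $R$ and large $n$. So the crude counts do not give the literal statement.

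Use the sharper counts $N_n=\binom{n+|\mathcal{X}|-1}{|\mathcal{X}|-1}\le(n+1)^{|\mathcal{X}|-1}$ and $|T_P|\ge N_n^{-1}2^{nH(P)}$. Combine them with $nR_n\le R_n(n+1)^{|\mathcal{X}|-1}$, which holds since $|\mathcal{X}|\ge 2$. Then for every type,
$\expectation_\varphi D_P(\varphi)\le\max\{\log e,R_n\}(n+1)^{|\mathcal{X}|-1}2^{-n[H(P)-R_n]^{+}}$.

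The derandomization and the sum over types each cost one more factor $N_n$. You end with the prefactor $\max\{\log e,R_n\}(n+1)^{3|\mathcal{X}|-3}$, and this is at most $(R_n+\frac{1}{2})(n+1)^{3|\mathcal{X}|}$ because $(n+1)^{-3}\max\{\log e,R_n\}\le\max\{\frac{\log e}{8},\frac{R_n}{8}\}\le R_n+\frac{1}{2}$.

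The final step $[H(P)-R_n]^{+}\ge[H(P)-R]^{+}-\gamma_n$ then gives exactly the stated bound for every $n$ and every $p_K$.
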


Proofs of Propositions \ref{pos:Kobayashi} 
and  \ref{pos:upb_Max-MI}
are found in 
\cite{oohamaSa26FrwOfSEncUnivStConv
}.
We use Proposition \ref{pos:Kobayashi} 
for the construction of $(\phin,$ $\psin)$. 
We use Proposition \ref{pos:upb_Max-MI}
to evaluate $\Delta_{\rm MI}^{(n)}
(\Phin|{p_X^n},p_K^n)=I($ $C^{(n)};\bm{X})$ 
for the proposed source encryption scheme.    

\noindent 
\underline{\textit{Construction of 
$(\wt{\phi}^{(n)},\wt{\psi}^{(n)})$:}} \/ 
We first derive upper bounds of 
$|\mathcal{C}_{\empty}^{n}(R)|$ 
under the cardinality bound (\ref{eqn:calCupb})
in Proposition \ref{pos:Kobayashi}.
We have 
the following chain of inequalities:
\begin{align}
&|\mathcal{C}_{\empty}^{n}(R)|
\MLeq{a} (n+1)^{|\mathcal{X}|} 2^{nR}
 = 2^{-n\gamma_n}(n+1)^{|\mathcal{X}|}2^{nR_n}
\notag\\
& \MLeq{b} 2^{-n\gamma_n }(n+1)^{|\mathcal{X}|}
  |{\cal X}|^{m+1}
\MEq{c}|{\cal X}^m|\label{eqn:upbCR}.  
\end{align}
Step (a) follows from 
the bound (\ref{eqn:calCupb}) in 
Proposition \ref{pos:Kobayashi}. 
Step (b) follows from that 
the bound (\ref{Rconstraint})
implies $2^{nR_n} \leq |\mathcal{X}|^{m+1}$. 
Step (c) follows from 
$2^{-n\gamma_n} = 
\left[(n+1)^{|\mathcal{X}|}
|\mathcal{X}|\right]^{-1}$.
The cardinality bound (\ref{eqn:upbCR}) 
implies that there exists a one-to-one mapping 
$\widetilde{\phi}^{(n)}_0:\mathcal{C}_{\empty}^{n}(R) 
\to \mathcal{X}^m$.
Using $\widetilde{\phi}_0^{(n)}$, we define the mapping 
$\widetilde{\phi}^{(n)}:\mathcal{X}^n \to \mathcal{X}^m$ 
by 
\begin{align*}
  \widetilde{\phi}^{(n)}
  \coloneqq\left\{\begin{array}{cl}
    \widetilde{\phi}_0^{(n)}(\bm{x}),
     &\mbox{ if }\bm{x} 
     \in \mathcal{C}_{\empty}^{n}(R),
    \vspace{0.2cm}\\
    \bm{x},
     &\mbox{ otherwise}.
  \end{array}
  \right.
\end{align*}
We define $\wt{\psi}^{(n)}:{\cal X}^m\to {\cal X}^n$ based on the one-to-one mapping 
$\widetilde{\phi}_0^{(n)}:\mathcal{C}_{\empty}^{n}(R) \to 
\mathcal{X}^{m}$. Concretely we define $\wt{\psi}^{(n)}$ 
so that for each 
$\widetilde{x}^m $ $\in $ $\widetilde{\phi}^{(n)}(\mathcal{C}_{\empty}^{n}(R))$, 
$\wt{\psi}^{(n)}(\widetilde{x}^m)
=\bm{x}$. Here $\bm{x}$ is the unique element 
such that $\wt{\phi}_0^{(n)}(\bm{x})
=\widetilde{x}^m$.
For $\bm{x}\in \wt{\phi}^{(n)}(\mathcal{X}^n -
\mathcal{C}_{\empty}^{n}(R))$
$=\mathcal{X}^n-\mathcal{C}_{\empty}^{n}(R) $, 
we define $\wt{\psi}^{(n)}$ such that 
$\wt{\psi}^{(n)}(\bm{x})=\bm{x}$.
The encoding and decoding procedures 
is shown in Fig.~\ref{fig:FV_CRn_shazou}.
\begin{figure}[t]
  \centering
  \includegraphics[width=84mm]{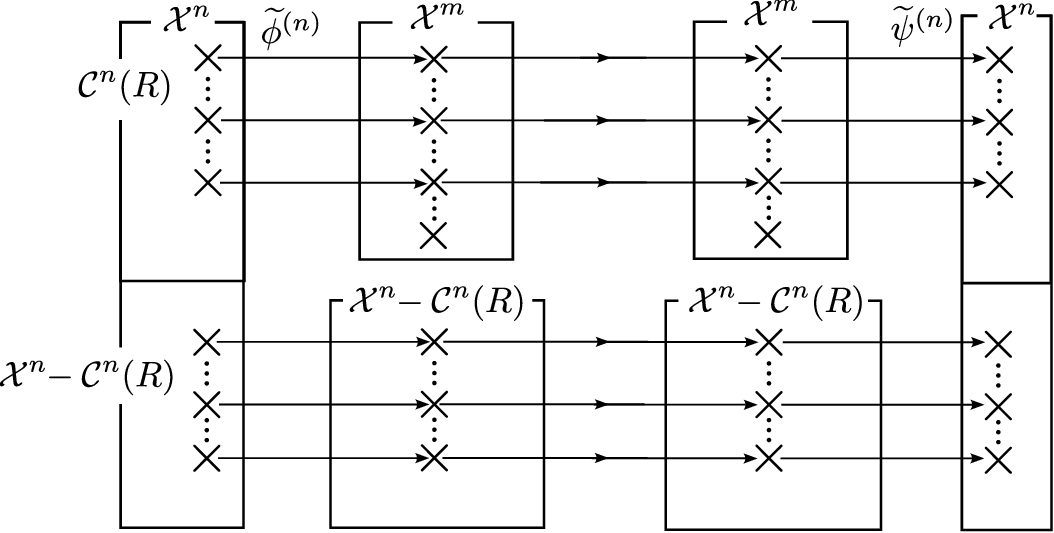}
  \caption{
  Encoding and decoding procedures} 
  \label{fig:FV_CRn_shazou}
\end{figure}

\noindent 
\underline{\it Source Encryption Scheme:} \/ 
To concretely construct $({\Phi}^{(n)},$ ${\Psi}^{(n)})$, we provide 
some definitions. Let 
$\wtPhin: 
{\cal X}^{n}\times {\cal X}^n 
\to {\cal X}^{m}\cup {\cal X}^n$ and 
$\wtPsin: {\cal X}^{n} \times 
({\cal X}^{m}\cup {\cal X}^n)  
\to {\cal X}^n $
be two maps, whose constructions 
will be stated later. 
Let  $\lceil a \rceil$ stands for 
the smallest integer not below $a$. 
Let $\nu_1:{\cal X}^{m} \to 
\{0,1\}^{\lceil nR_n \rceil}$  
and $\nu_2:{\cal X}^{n} \to 
\{0,1\}^{\lceil n \log |{\cal X}|\rceil}$ 
be two arbitrary injective maps.
Set ${\cal B}_1\coloneq\nu_1({\cal X}^m)$ and 
    ${\cal B}_2\coloneq \nu_2({\cal X}^n)$. 
Let $\nu: {\cal X}^m\cup {\cal X}^n \to B_1 \cup B_2$
be a bijective map such that
\newcommand{\NuMap}{
}{
\begin{align*}
\nu(\underline{a})=\left\{
 \begin{array}{l}
 \nu_1(\underline{a})\mbox{ if } \underline{a} 
 \in {\cal X}^{m},
 \vspace*{1mm}\\
 \nu_2(\underline{a})\mbox{ if } 
 \underline{a}\in {\cal X}^{n}.
 \end{array}\right.
  \end{align*}
}
Let $\kappa: B_1\cup B_2\to  {\cal X}^m \cup {\cal X}^n$
be a bijective map such that for each $i=1,2$ and for 
each $\underline{b}\in {\cal B}_i$, 
$ \kappa(\underline{b})=\nu_i^{-1}(\underline{b})$. 
It is obvious that $\kappa=\nu^{-1}$.   
The source encryption scheme
$({\Phi}^{(n)},$ ${\Psi}^{(n)})$ 
consists of the following three steps:
\newcommand{\FigBinEx}{
\begin{figure}[t]
  \centering
  \includegraphics[width=88mm]{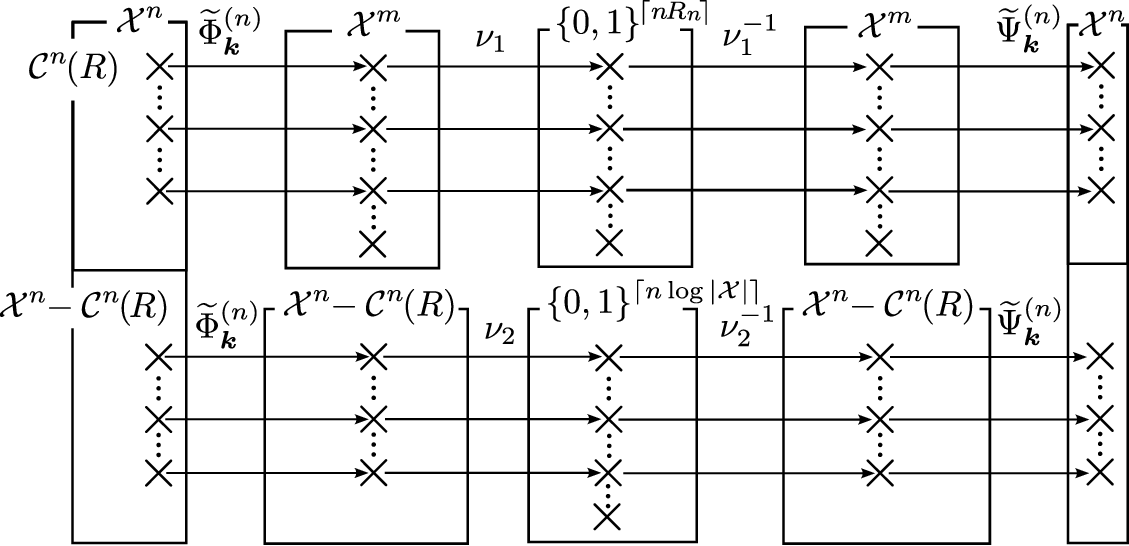}
  \caption{
  Binary sequence expressions of ciphertexts} 
  \label{fig:FV_BinaryExPr}
\end{figure}}
\newcommand{\FVScEncSh}{
\begin{figure}[t]
 \centering
 \includegraphics[width=88mm]{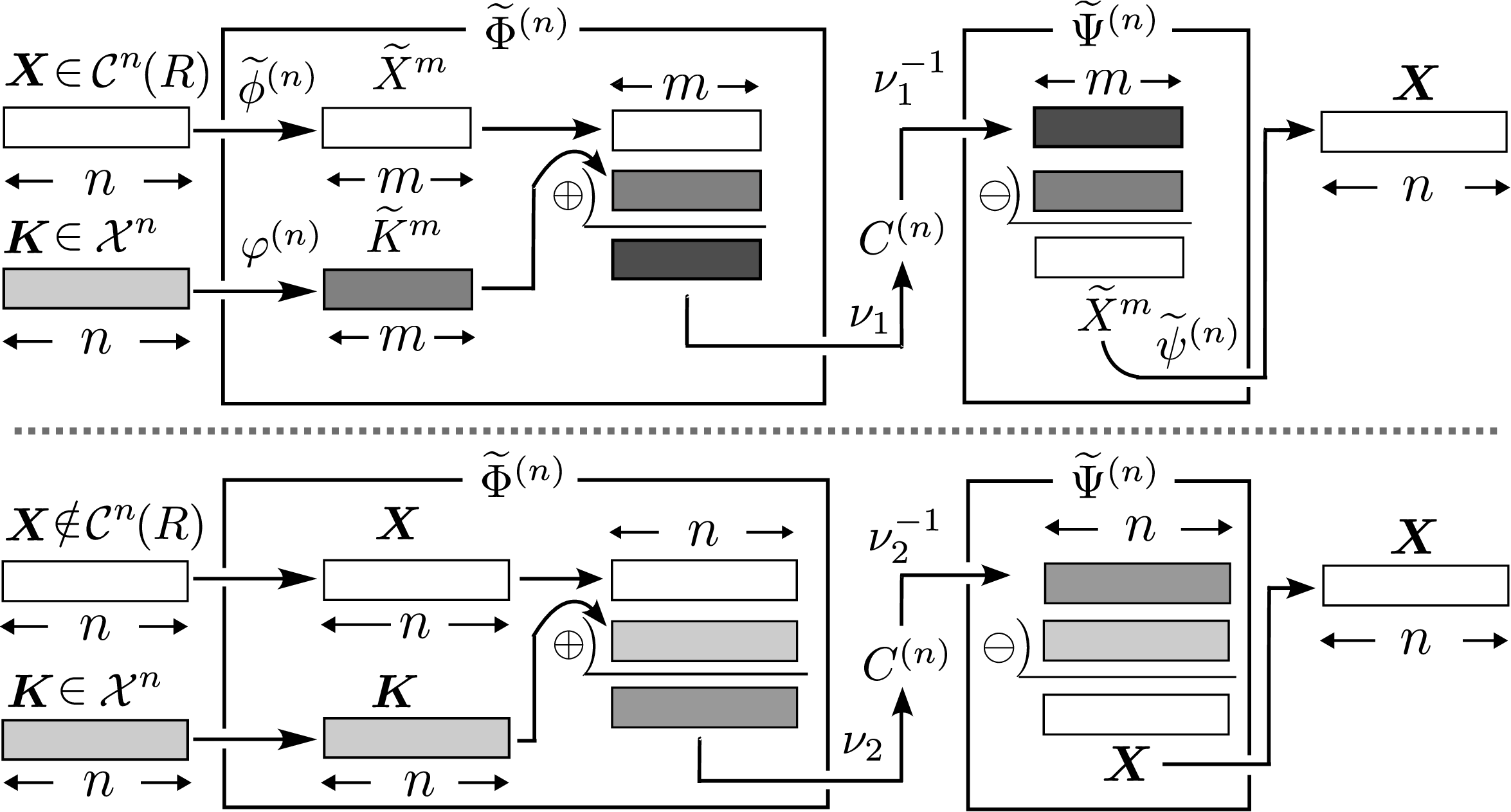}
 \caption{Encryption and decryption procedures}
 \label{FV_SourceEncScheme}
\end{figure}}
\FigBinEx
\begin{itemize}
  \item[1)] {\it Construction of 
  $\widetilde{\Phi}^{(n)}$:} \/ 
  For $(\bm{k},\bm{x})\in {\cal X}^n 
  \times {\cal X}^n$, 
  define  $\widetilde{\Phi}^{(n)}$
  by
  \begin{align*}
   &  \widetilde{\Phi}^{(n)}(\bm{k},\bm{x})
   =\wtPhiknx
      \\
   & 
   \coloneqq\left\{\begin{array}{ll}
      \hspace*{-3pt}\varphi^{(n)}(\bm{k}) 
      \oplus \widetilde{\phi}^{(n)}(\bm{x}),
     &\hspace*{-4pt}\mbox{if }\bm{x} 
          \in \mathcal{C}_{\empty}^{n}(R),
          \vspace{0.2cm}\\
           \hspace*{-4pt} 
           \bm{k}\oplus \bm{x}, 
            &\hspace*{-3pt}\mbox{if }\bm{x}\in \mathcal{X}^n
            -\mathcal{C}_{\empty}^{n}(R).
          \end{array}
          \right.
  \end{align*}
 \item[2)]{\it Binary Sequence Expressions 
  of Ciphertexts:}\/ 
  For each $\bm{k}\in {\cal X}^n$, binary 
  sequences 
  $\nu\circ\widetilde{\Phi}_{\bm{k}}^{(n)}(\bm{x})$, 
  $\bm{x}\in {\cal X}^n$ 
  are as follows.   
  If $\bm{x}\in {\cal C}_{\empty}^n(R)$, 
  $\nu_1\circ\widetilde{\Phi}_{\bm{k}}^{(n)}(\bm{x})
   $ is 
   a binary sequence with the length 
  $\lceil nR_n \rceil$. 
  If $\bm{x}\in \mathcal{X}^n
         -\mathcal{C}_{\empty}^{n}(R)$,  
  $\nu_2\circ\widetilde{\Phi}_{\bm{k}}^{(n)}(\bm{x})$
  is 
  a binary 
  sequence with the length 
  $\lceil n\log |{\cal X}|\rceil$. 
  This binary sequence is sent through 
  the public channel 
  (see Fig.~\ref{fig:FV_BinaryExPr}).
  We set 
   $
   C^{(n)}=\PhiKnX=
   \nu\circ\widetilde{\Phi}_{\bm{K}}^{(n)}(\bm{X}).
   $
   For the length of 
   $C^{(n)}=$ $\PhiKnX$, we have the following:
 \begin{align*}
   L_{\Phi_{\bm{K}}^{(n)}}(\bm{X})=
    \begin{cases}
    \lceil nR_n \rceil 
     &\mbox{ if }\bm{X}
       \in {\cal C}^n_{\empty}(R),
       \\            
      \lceil n \log |{\cal X}| \rceil 
      &\mbox{ if }\bm{X}\in \mathcal{X}^n
            -\mathcal{C}_{\empty}^{n}(R).
  \end{cases}  
  \end{align*}
  \item[3)] {\it Construction of $\Psin$:}\/ 
  $\Psin$ receives the ciphertext 
  $C^{(n)}=\PhiKnX$ and the key $\bm{K}$, respectively, 
  through public and private channels. For 
  $(\bm{K},C^{(n)})$, define ${\Psi}^{(n)}$ by
  \begin{align*}
  &\Psin(\bm{K},C^{(n)})
  =\wtPsin(\bm{K},\nu^{-1}(C^{(n)}))
  \\
  &=\left\{
  \begin{array}{rl}
   \wtPsin(\bm{K},\nu_1^{-1}(C^{(n)}))& 
   \mbox{if }|C^{(n)}|=  \lceil nR_n \rceil, \\
   \wtPsin (\bm{K}, \nu_2^{-1}(C^{(n)})) & 
   \mbox{if }|C^{(n)}|= 
    \lceil n\log |{\cal X}|\rceil.
  \end{array}
  \right.
  \end{align*}
The decryption process by  
$\wtPsin$ consists of the following 
three steps:  
\begin{itemize}
\item[i)] 
Using $\varphi^{(n)}$, ${\wtPsin}$ encodes 
$\bm{K}$ into $\widetilde{K}^m = \varphi^{(n)}(\bm{K})$.
\item[ii)] If $L_{\Phi_{\bm{K}}^{(n)}}(\bm{X})
     =\lceil nR_n \rceil$, 
     ${\wtPsin}$ 
subtracts $\widetilde{K}^m$ from 
$\nu_1^{-1}($ $C^{(n)})$ 
to obtain $\widetilde{X}^m = \phinX$.
Finally, $\wtPsin$ outputs ${\bm{X}}$ 
by applying the decoder $\wt{\psi}^{(n)}$ 
to $\widetilde{X}^m$.
\item[iii)]
 If $L_{\Phi_{\bm{K}}^{(n)}}(\bm{X})
     =\lceil n \log |{\cal X}| \rceil$, ${\wtPsin}$ 
subtracts $\bm{K}$ from 
$\nu_2^{-1}(C^{(n)})$ to obtain $\bm{X}$.
\end{itemize}
\end{itemize}
The above construction of  
$({\Phi}^{(n)},$ ${\Psi}^{(n)})$ is shown 
in Fig.~\ref{FV_SourceEncScheme}.
\FVScEncSh
Let $\phi^{(n)}\coloneqq \nu\circ\wt{\phi}^{(n)},
\psi^{(n)}\coloneqq \wt{\psi}^{(n)}\circ\nu^{-1}$.
By the constructions of 
$(\phi^{(n)},\psi^{(n)})$ and $(\Phin, \Psin)$, 
we have that $\forall \bm{k}\in {\cal X}^n$,  
\begin{align}
\underline{\cal D}=
\underline{\cal D}_{\bm{k}}
=\{{\cal D}_l\}_{l=\lceil nR_n\rceil, 
  \lceil n\log|{\cal X}|\rceil }.
  \label{eqn:LengthEq}
\end{align}
For the proposed $(\Phin, \Psin)$, 
we evaluate upper bounds of $\overline{L}_{\Phin}$ 
and $\Delta_{\rm MI}^{(n)}(\Phin|{p_X^n},p_K^n)$ $=
I(C^{(n)};$ $\bm{X})$ to derive the two 
bounds in Theorem~\ref{MI_FV}. Let 
$
h(q):=-q \log q -(1-q)\log (1-q),\: q \in [0,1],
$ 
be the binary entropy function.
The following lemma is useful for 
the computation of 
$
I(C^{(n)};$ $\bm{X})$.  
\begin{lemma}\label{lem:logOneMinuSq}
Let $q\in[0,1]$, $\eta \geq 1$, $\omega \geq 0$. 
We set $\Omega=\eta 2^{-\omega}$. 
If $q\leq \Omega$,
then we have 
$
h(q) \leq [\omega+\log e]\Omega.
$
\end{lemma}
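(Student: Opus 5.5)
The plan is to reduce the claim to a monotonicity statement about an explicit upper bound on $h$. The first step is an elementary bound on the second term of the binary entropy. Since $-\ln(1-q)\leq q/(1-q)$ for $q\in[0,1)$, we get $(1-q)\log\frac{1}{1-q}\leq q\log e$. Hence
\begin{align*}
h(q)\leq g(q)\coloneqq q\log\frac{1}{q}+q\log e,\qquad q\in[0,1],
\end{align*}
with the conventions $0\log\frac{1}{0}=0$ and $h(1)=0\leq g(1)$.

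The second step is to show that $g$ is nondecreasing on $[0,1]$. Its derivative is $g'(q)=\log\frac{1}{q}-\log e+\log e=\log\frac{1}{q}$, which is nonnegative on $(0,1]$, and $g$ is continuous at $0$. So $q\leq q'\leq 1$ implies $g(q)\leq g(q')$.

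The third step splits into two cases according to the size of $\Omega$.
\begin{itemize}
\item[1)] If $\Omega\geq 1$, then $(\omega+\log e)\Omega\geq \log e>1\geq h(q)$, and the claim is immediate.
\item[2)] If $\Omega<1$, then $q\leq\Omega<1$, so $h(q)\leq g(q)\leq g(\Omega)=\Omega\log\frac{1}{\Omega}+\Omega\log e$ by the first two steps. It remains to compare $\log\frac{1}{\Omega}$ with $\omega$. Since $\eta\geq 1$, we have $\Omega=\eta 2^{-\omega}\geq 2^{-\omega}$, that is, $\log\frac{1}{\Omega}\leq\omega$. This gives $h(q)\leq(\omega+\log e)\Omega$.
\end{itemize}

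The only delicate point is the case split. Without it, $g$ would be applied outside $[0,1]$, where it is decreasing, and the bound $h\leq 1$ alone is too weak when $1/2\leq\Omega<1$. The monotonicity of $g$ on $[0,1]$ is what makes case 2) work. The hypothesis $\eta\geq1$ is used exactly once, to convert $\log\frac{1}{\Omega}$ into $\omega$.
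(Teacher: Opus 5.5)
Your proof is correct and is essentially the paper's own argument. It uses the same bound $h(q)\leq[-\log q+\log e]q$ (you get it from $\ln(1+x)\leq x$ with $x=q/(1-q)$, the paper from the power series of $\ln\frac{1}{1-q}$), the same monotonicity of $g$ on $[0,1]$, and the same case split at $\Omega=1$ with $\eta\geq 1$ used to replace $\log\frac{1}{\Omega}$ by $\omega$.

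One small correction to your closing commentary, which does not affect the proof: since $\eta\geq 1$ gives $\omega\geq\log\frac{1}{\Omega}$, you have $(\omega+\log e)\Omega\geq\Omega\log\frac{e}{\Omega}>1$ for $\Omega\in[1/2,1)$, so $h\leq 1$ alone would in fact suffice on that range.
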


Proof of Lemma \ref{lem:logOneMinuSq} 
is given in Appendix \ref{apd:logOneMinuSq}.
In the following we give the proof of Theorem~\ref{MI_FV}.

\textit{Proof of Theorem~\ref{MI_FV}:}
We evaluate the average codeword length and 
the mutual information for the coding scheme described above.
By Propositions~\ref{pos:Kobayashi} 
and~\ref{pos:upb_Max-MI}, 
$\forall n\in \mathbb{N}$, 
\begin{align}
& \Pr\{\bm{X} \in \mathcal{X}^n 
  - \mathcal{C}_{\empty}^{n}(R)\}
\leq (n+1)^{|\mathcal{X}|}
         2^{-nE_{\empty}(R|p_X)},
\label{eqn:ThOnePrfaa}\\
&m \log |\mathcal{X}| - H(\widetilde{K}^m) 
 \leq \left(R_n+\frac{1}{2}\right)(n+1)^{3|{\cal X}|}
      2^{-n[F(R|p_K)-\gamma_n]}
\notag\\
&\leq (R_n+1)|{\cal X}|(n+1)^{4|{\cal X}|}
2^{-nF(R|p_K)}. \label{eqn:ThOnePrfb}   
\end{align}
The above two bounds are quite useful for 
evaluating upper bounds of 
$\overline{L}_{\Phin}$ and $I(C^{(n)};\bm{X})$.   

\noindent \underline{\it Evaluation of 
the Average Codeword Length:} \/ We first 
derive an upper bound of the average codeword length.
Since $(\Phin, \Psin)$ satisfies (\ref{eqn:LengthEq}), 
we have
$\overline{L}_{\Phin}=\overline{L}_{\phin}^{\prime}$. 
On the upper bound of $\overline{L}_{\Phin}$,
we have the following chain of inequalities:
\begin{align*} 
 &
 \overline{L}_{\Phin}
 =\overline{L}_{\phin}^{\prime}
=\sum_{\bm{x} \in \mathcal{X}^n} p_{\bm{X}}(\bm{x}) L_{\phin}^{\prime}
\\
&
 = \sum_{\bm{x} \in \mathcal{C}_{\empty}^{n}(R)} 
    p_{\bm{X}}(\bm{x})\lceil nR_n \rceil
 + \sum_{\bm{x}\in 
    \mathcal{X}^n-\mathcal{C}_{\empty}^{n}(R)} 
    p_{\bm{X}}(\bm{x})
    \lceil n\log|\mathcal{X}| \rceil
   \\
 &\leq (nR_n+1)
 \Pr\{\bm{X}\in 
    \mathcal{C}_{\empty}^{n}(R)\}
 \\
 &\quad + (n\log |\mathcal{X}|+1) 
   \Pr\{\bm{X}\in \mathcal{X}^n
   -\mathcal{C}_{\empty}^{n}(R)\}
 \\
 &= nR_n+1+n(\log |\mathcal{X}| - R_n)
  \Pr\{\bm{X}\in \mathcal{X}^n
  -\mathcal{C}_{\empty}^{n}(R)\}
  \\
 &\MLeq{a} nR_n+1
   +n(\log|\mathcal{X}|- R)(n+1)^{|\mathcal{X}|} 
   2^{-nE_{\empty}(R|p_K)}.
\end{align*}
Step~(a) follows from $R_n\geq R$ and 
(\ref{eqn:ThOnePrfaa}).  

\noindent 
\underline{\it Evaluation of the Mutual Information:} \/ We next derive an upper bound of the mutual information.
For $I(C^{(n)};\bm{X})$, we have the 
following equalities:
\begin{align*} \label{eq:I_C;X0}
 &I(C^{(n)};\bm{X})
 = I(C^{(n)};\bm{X}, L^{\prime}_{\phin}(\bm{X})) \\
 &= I(C^{(n)}; L^{\prime}_{\phin}(\bm{X})
    )
    + I(C^{(n)};\bm{X} |L^{\prime}_{\phin}(\bm{X})
    ).
 \stepcounter{equation}\tag{\theequation}
\end{align*}
Set $q=\Pr\{\bm{X} \in \mathcal{X}^n - \mathcal{C}_{\empty}^{n}(R)\}$ and 
$\Omega=(n+1)^{|\mathcal{X}|}$ $ 
   2^{-nE_{\empty}(R|p_X)}$. Then the bound 
(\ref{eqn:ThOnePrfaa}) is equivalent to
\begin{align}
& q\leq \Omega=\eta 2^{-\omega},
  \eta= (n+1)^{|\mathcal{X}|}>1, 
  \omega=nE_{\empty}(R|p_X).
\label{eqn:qOmegaEtc}  
\end{align} 
For the first term in the right members of~(\ref{eq:I_C;X0}), 
we have
\begin{align*}\label{eq:FVconMI1}
 &I(C^{(n)}; L^{\prime}_{\phin}(\bm{X}))
 \leq 
 H(L^{\prime}_{\phin}(\bm{X}))
 =h(q)
 \\
 &\MLeq{a}\left\{
    nE_{\empty}(R|p_X)+\log e\right\}
 (n+1)^{|\mathcal{X}|}2^{-nE_{\empty}(R|p_X)}.
 \stepcounter{equation}\tag{\theequation}
\end{align*}
Step~(a) follows from (\ref{eqn:qOmegaEtc}) 
and Lemma~\ref{lem:logOneMinuSq}. 
For the second term in the right members of~(\ref{eq:I_C;X0}), we have the following equalities:
\begin{align*}\label{eq:FVconMI2}
 &I(C^{(n)};\bm{X}|L^{\prime}_{\phin}(\bm{X}))
 \\
 &= \Pr\{L^{\prime}_{\phin}=\lceil nR_n \rceil\}
    I(C^{(n)};\bm{X}|
    L^{\prime}_{\phin}(\bm{X})=\lceil nR_n \rceil )
  \\
 &\quad + \Pr\{ L^{\prime}_{\phin}(\bm{X})
    =\lceil n\log|\mathcal{X}|\rceil\}
 \\
 &\quad \times I(C^{(n)};\bm{X}|
 L^{\prime}_{\phin}
 (\bm{X})
 =\lceil n\log|\mathcal{X}| \rceil).
 \stepcounter{equation}\tag{\theequation}
\end{align*}
We first evaluate the first term in the 
right members of~(\ref{eq:FVconMI2}).
We have the following inequalities:
\begin{align*}\label{ineq:FVconMI1}
 &I(C^{(n)};\bm{X}|
 L^{\prime}_{\phin}(\bm{X})=\lceil nR_n \rceil )
 \\
&= H(\widetilde{K}^m \oplus \phinX|  
   L^{\prime}_{\phin}
   (\bm{X})
    =\lceil nR_n \rceil )\\
 &\quad-H(\widetilde{K}^m \oplus \phinX|\bm{X},
   L^{\prime}_{\phin}
   (\bm{X})
    =\lceil nR_n \rceil ) 
   \\
 &\leq m \log |\mathcal{X}|
  - H(\widetilde{K}^m|\bm{X},
    L^{\prime}_{\phin}
    (\bm{X})
    =\lceil nR_n \rceil)\\
 &\MEq{a}m\log |\mathcal{X}| 
         -H(\widetilde{K}^m).
 \stepcounter{equation}\tag{\theequation}
\end{align*}
Step (a) follows from $\bm{K}\perp \bm{X}$.
For the second term in the 
right members of~(\ref{eq:FVconMI2}), 
we have the following: 
\begin{align*}\label{ineq:FVconMI2}
 &I(C^{(n)};\bm{X}|L^{\prime}_{\phin}(\bm{X})
 =\lceil n\log|\mathcal{X}| \rceil)
 \leq n\log|\mathcal{X}|.
 \stepcounter{equation}\tag{\theequation}
\end{align*}
From~(\ref{eq:FVconMI2}), (\ref{ineq:FVconMI1}), and~(\ref{ineq:FVconMI2}), we have    
the following bound of the conditional mutual information:
\begin{align}
 &I(C^{(n)};\bm{X}|L^{\prime}_{\phin}(\bm{X}))
 \leq m \log |\mathcal{X}| - H(\widetilde{K}^m)
\notag\\ 
 &\quad +\Pr\{\bm{X} \in \mathcal{X}^n 
   - \mathcal{C}_{\empty}^{n}(R)\} n\log |\mathcal{X}|
\notag\\
 &\MLeq{a} (R_n+1)|{\cal X}|(n+1)^{4|{\cal X}|}2^{-nF(R|p_K)}
 \notag\\
 &\quad  + n\log |\mathcal{X}|(n+1)^{|\mathcal{X}|} 2^{-nE_{\empty}(R|p_X)}.
\label{ineq:FVconMI3}
 \end{align}
Step (a) follows from (\ref{eqn:ThOnePrfaa}) and (\ref{eqn:ThOnePrfb}). 
From~(\ref{eq:I_C;X0}), 
(\ref{eq:FVconMI1}), and~(\ref{ineq:FVconMI3}), 
we have 
$\forall n\in \mathbb{N}$ 
and $\forall (p_X,$ $p_K)\in {\cal P}^2({\cal X})$,
\begin{align*} 
 &I(C^{(n)};\bm{X}) \leq 
 \left\{nE_{\empty}(R|p_X)
 +\log e \right\}(n+1)^{|\mathcal{X}|}
  2^{-nE_{\empty}(R|p_X)}\\
 &\quad + (R_n+1)|{\cal X}|(n+1)^{4|{\cal X}|} 2^{-nF(R|p_K)}\\
 &\quad + n \log |\mathcal{X}|(n+1)^{|\mathcal{X}|}
 2^{-nE_{\empty}(R|p_X)}\\
 &= \left\{ n[E_{\empty}(R|p_X)   
    + \log |\mathcal{X}|]+\log e \right\}
    (n+1)^{|\mathcal{X}|} 2^{-nE_{\empty}(R|p_X)}\\
 & \quad + (R_n+1)|{\cal X}|(n+1)^{4|{\cal X}|}2^{-nF(R|p_K)},
\end{align*}
completing the proof. 
\hfill\IEEEQED

\subsection{Proof of Theorem 
\ref{th:converseTh_FV}}
\label{subsec:ThmConv}

In this subsection, we prove 
Theorem~\ref{th:converseTh_FV}.
To prove this theorem it suffices to show that under the assumption that $R$ is 
$\delta$-admissible we have 
$R \geq R^{\star}(p_X,p_K)$.
From (\ref{def:Rast2}), $R \geq R^{\star}(p_X,p_K)$ is equivalent to the following:
\begin{align}
  &R \geq H(X),\: 
  H(K) \geq H(X).\label{RKgeqX}
\end{align}

\newcommand{\OmitT}{
The proof of Theorem~\ref{th:converseTh_FV} assumes that $R$ is $\delta$-admissible. 
Under this assumption, proving that $R \geq R^{\ast}(p_X,p_K)$ is equivalent to proving Theorem~\ref{th:converseTh_FV}.
From (\ref{def:Rast2}), 
$R \geq R^{\ast}(p_X,p_K)$ 
is equivalent to the following:
\begin{align}
 &R \geq H(X), \label{RgeqX}\\
 &H(K) \geq H(X). \label{KgeqX}
\end{align}
}
\textit{Proof of Theorem~\ref{th:converseTh_FV}:}
We assume that 
$R$ is $\delta$-admissible. 
Then $\exists \{(\Phin,$ $\Psin)\}^{\infty}_{n=1}$ 
such that $\forall \gamma >0$, 
    $\exists n_0=n_0(\gamma) \in \mathbb{N}$, 
    $\forall n$ $\geq n_0$, 
\begin{align}
 &\frac{1}{n}
 \expectation\left[L_{\PhiKn}(\bm{X})\right] \leq R+\gamma, \:
 I(C^{(n)};\bm{X})\leq \delta. \label{gyaku2_fv}
\end{align}
We first prove $R\geq H(X)$ 
in (\ref{RKgeqX}). We have that $\forall n\geq n_0$, 
\begin{align}
 &R+\gamma\MGeq{a}
 \frac{1}{n}\expectation
 \left[L_{\PhiKn}(\bm{X})
 \right] 
\notag\\
 & 
 \MGeq{b} 
 H(X)-\frac{1}{n}\log n
     -\frac{1}{n}\log(2e\log|\mathcal{X}|).
\label{gyaku3_fv}
\end{align}
Step (a) follows from (\ref{gyaku2_fv}) and Step (b) follows from Proposition~\ref{pos:Ltilde}.
Taking $n \to \infty$ in~(\ref{gyaku3_fv}), 
we have $R +\gamma \geq H(X)$. 
Since $\gamma>0$ can arbitrary be small, 
we have $R \geq H(X)$.

We next prove $H(K)\geq H(X)$ in (\ref{RKgeqX}). 
We have that $\forall n\geq n_0$, 
\begin{align*}\label{lwb:MI}
 &\delta\MGeq{a} I(C^{(n)};\bm{X})
 = H(\bm{X})-H(\bm{X},\PhiXnK|C^{(n)})\\
 &\MGeq{b} H(\bm{X})-H({\bm X},\bm{K} |C^{(n)})
 \MEq{c}H(\bm{X})-H(\bm{K}|C^{(n)})\\
 &\geq H(\bm{X})-H(\bm{K})= n[H(X)-H(K)].
 \stepcounter{equation}\tag{\theequation}
\end{align*}
Step~(a) follows from (\ref{gyaku2_fv}).
Step~(b) follows from the data processing 
inequality.
Step~(c) follows from that by the construction of $(\Phin,$ $\Psin)$, $\bm{K}$ and $C^{(n)}$ 
uniquely determine $\bm{X}$. 
From~(\ref{lwb:MI}), we have
$H(K)\geq H(X)-(1/n)\delta$. 
Taking $n\to\infty$ in this bound.
we have $H(K)\geq H(X)$.\hfill\IEEEQED
%

\newcommand{\OmiRR}{
}{
\section{Conclusions}

We proposed a framework 
for the variable length source encryption.
Under this framework, we obtained the necessary and sufficient condition on $R$ and $(p_X,p_K)$ for simultaneously achieving communication efficiency and security. 
The obtained condition does not depend on the constant 
$\delta\in (0,\delta_0]$, implying that we have the 
strong converse coding theorem for the proposed
framework of source encryption.
We also establish a universal construction 
of $\left\{(\Phin,\Psin)\right\}_{n=1}^{\infty}$
attaining the condition in the sense that 
$\left\{(\Phin,\Psin)\right\}_{n=1}^{\infty}$   
does not depend on $(p_X,p_K) \in {\cal P}^2({\cal X})$.

Future works include extending the source encryption framework to the case of more general sources, and extending this framework to the 
variable-length lossy source coding problem 
where the distortion is allowed.
}

\bibliographystyle{IEEEtran}
\bibliography{isita2026_FF_FV}

\appendix
\renewcommand{\thesection}{Appendix~\Alph{section}} 

\newcommand{\OmiTApdA}{
\subsection{Proof of Lemma \ref{lem:scs_tilde}} 
\label{prf:scs_tilde}

In this appendix, we prove Lemma \ref{lem:scs_tilde}.
Lemma \ref{lem:scs_tilde} part a) follows 
directly from the definition of $\mathcal{D}_l$.
We prove the part b).
We denote the restriction of $\phin$ 
to $\mathcal{D}_l$, by $\phin_{\mathcal{D}_l}$. 
From the definition of $\mathcal{D}_l$, we have $|\phin(\mathcal{D}_l)| = |\mathcal{D}_l|$, 
from which, by setting 
$m_l = \lceil \log |\mathcal{D}_l| \rceil$, 
there exists a one-to-one mapping
\begin{align*}
  \tau^{(n)}:\phin(\mathcal{D}_l) \to \{0,1\}^{m_l}.
\end{align*}
The maps $\phin_{\mathcal{D}_l}$ and 
$\tau^{(n)}$ are shown in Fig. \ref{tau_shazou}.
\begin{figure}[t]
  \centering
  \includegraphics[width=84mm]{tildephi_shazou.eps}
  \caption{The mappings 
  $\widetilde{\phi}^{(n)}_{\mathcal{D}_l}$ and $\tau^{(n)}$} \label{tau_shazou}
\end{figure}
For each $\mathcal{D}_l$, we define 
the mapping $\widetilde{\phi}_{\mathcal{D}_l}^{(n)}: 
\mathcal{D}_l \to \{0,1\}^*$ by
\begin{align*}
  \widetilde{\phi}^{(n)}_{\mathcal{D}_l}
  = \tau^{(n)} \circ \phin_{\mathcal{D}_l}.
\end{align*}
From the construction of $\widetilde{\phi}^{(n)}_{\mathcal{D}_l}$, 
we have for $\bm{X} \in \mathcal{D}_l$,
\begin{align*}
  L^{\prime}_{\widetilde{\phi}^{(n)}}(\bm{X}) 
  = |\widetilde{\phi}^{(n)}(\bm{X})| 
  = m_l = \lceil \log |\mathcal{D}_l| \rceil,
\end{align*}
from which, we have for $\bm{X} \in \mathcal{D}_l$, 
\begin{align} \label{rng:tL'}
  \log |\mathcal{D}_l| \leq 
   L^{\prime}_{\widetilde{\phi}^{(n)}}(\bm{X})
  \leq \log |\mathcal{D}_l| +1.
\end{align}
Here, taking the expectation of 
$L^{\prime}_{\widetilde{\phi}^{(n)}}(\bm{X})$ 
with respect to  $\bm{X}$, we have
\begin{align*} \label{eq:E[L']}
  \overline{L}^{\prime}_{\widetilde{\phi}^{(n)}} 
  &= \sum_{\bm{x}} \Pr \{\bm{X} = \bm{x}\}
  L^{\prime}_{\widetilde{\phi}^{(n)}}(\bm{X})   
  \\                                    
 &=\sum_{l=1}^{l_{\textrm{max}}} \sum_{\bm{x} 
 \in \mathcal{D}_l} 
 \Pr \{\bm{X}=\bm{x}\} 
 L^{\prime}_{\widetilde{\phi}^{(n)}}(\bm{X}).
  \stepcounter{equation}\tag{\theequation}
\end{align*}
From (\ref{rng:tL'}) and (\ref{eq:E[L']}), for the upper and lower bounds of $\overline{L}^{\prime}_{\widetilde{\phi}^{(n)}}$, we have 
the following inequalities:
\begin{align}
  \overline{L}^{\prime}_{\widetilde{\phi}^{(n)}} 
  &\leq \sum_{l=1}^{l_{\textrm{max}}} \sum_{\bm{x} \in   
  \mathcal{D}_l} \Pr \{\bm{X} = \bm{x}\}
  (\log |\mathcal{D}_l| +1) 
 \notag \\
 &= \sum_{l=1}^{l_{\textrm{max}}} 
 p_{\bm{X}} (\mathcal{D}_l) 
 \log|\mathcal{D}_l|+1,
 \label{upb:tL'}\\
 \overline{L}^{\prime}_{\widetilde{\phi}^{(n)}} 
 &\geq \sum_{l=1}^{l_{\textrm{max}}} \sum_{\bm{x} 
 \in \mathcal{D}_l} \Pr \{\bm{X} = \bm{x}\} 
 \log |\mathcal{D}_l| 
 \notag\\
&= \sum_{l=1}^{l_{\textrm{max}}} 
   p_{\bm{X}} (\mathcal{D}_l)
   \log|\mathcal{D}_l|. \label{lwb:tL'}
\end{align}
Thus, from (\ref{upb:tL'}) and (\ref{lwb:tL'}), we have
\begin{align*}
  \sum_{l=1}^{l_{\textrm{max}}}
  p(\mathcal{D}_l) \log |\mathcal{D}_l| &\leq \overline{L}^{\prime}_{\widetilde{\phi}^{(n)}} \leq \sum_{l=1}^{l_{\textrm{max}}}
  p(\mathcal{D}_l) \log |\mathcal{D}_l| +1.
\end{align*}
We finally prove the first bound in the part b).   
From the part a), we have 
$\log |\mathcal{D}_l| \leq l=L^{\prime}_{\phin}(\bm{X})$,
which together with 
(\ref{upb:tL'}) yields the following: 
\begin{align*}
& \overline{L}^{\prime}_{\widetilde{\phi}^{(n)}} \leq  
  \sum_{l=1}^{l_{\textrm{max}}} 
   p_{\bm{X}}(\mathcal{D}_l)L^{\prime}_{\phin}(\bm{X}) +1 \\
&= \sum_{l=1}^{l_{\textrm{max}}} 
   \sum_{\bm{x} \in \mathcal{D}_l} 
   \Pr \{\bm{X}= \bm{x}\} L^{\prime}_{\phin}(\bm{X})+1 
 = \overline{L}^{\prime}_{\phin} + 1. 
\end{align*}
Hence, we have the part b).
\hfill\IEEEQED
}

\subsection{Proof of Proposition 
\ref{pos:Ltilde}}\label{prf:pos_Ltilde}
In this appendix, we give the proof 
of Proposition~\ref{pos:Ltilde}.
We first present a lemma necessary for the proof.

\begin{lemma}\label{lem:HLupb}
Let $L$ be a positive integer-valued random variable with mean
$\mu=\expectation[L]$.
Then,
\begin{align}
 H(L)& \leq 
 \mu\log\mu-(\mu-1)\log(\mu-1)
\notag\\
& =\mu h(\mu^{-1})
     < \log(e\mu).
\label{eqn:HLUpB}
\end{align}
\end{lemma}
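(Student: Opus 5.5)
The plan is the classical maximum-entropy argument for positive-integer random variables, namely the geometric distribution bound. Let $p_l=\Pr\{L=l\}$, $l\in\mathbb{N}$, with $\mu=\expectation[L]\geq 1$. If $\mu=1$ then $L\equiv 1$, so $H(L)=0$ and the claim holds with the convention $0\log 0=0$. Assume therefore $\mu>1$ and introduce the geometric distribution
\begin{align*}
g_l=\frac{1}{\mu}\Bigl(1-\frac{1}{\mu}\Bigr)^{l-1},\quad l\in\mathbb{N}.
\end{align*}
This $g$ has mean $\mu$.

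The first step uses nonnegativity of divergence, $D(p\|g)\geq 0$, which gives
\begin{align*}
H(L)\leq -\sum_l p_l\log g_l =\log\mu-(\mathrm{E}[L]-1)\log\Bigl(1-\frac{1}{\mu}\Bigr).
\end{align*}
Substituting $\mathrm{E}[L]=\mu$, the right-hand side becomes $\log\mu-(\mu-1)\log\frac{\mu-1}{\mu}=\mu\log\mu-(\mu-1)\log(\mu-1)$, which is the first claimed expression. Expanding $\mu h(\mu^{-1})=\mu[\mu^{-1}\log\mu+(1-\mu^{-1})\log\frac{\mu}{\mu-1}]$ and simplifying gives the same quantity, so the stated identity is routine algebra.

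For the strict inequality, rewrite the bound as $\log\mu+(\mu-1)\log\bigl(1+\frac{1}{\mu-1}\bigr)$. Then apply $\ln(1+x)<x$ for $x>0$ to get $(\mu-1)\log(1+\frac{1}{\mu-1})<\log e$, and hence the whole expression is $<\log(e\mu)$. In the degenerate case $\mu=1$ the bound $0<\log e$ also holds.

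The only delicate point is the divergence step when $\sum_l p_l\log g_l$ is an infinite series. Since $\mu<\infty$ and $-\log g_l$ is affine in $l$, the cross-entropy is finite, so Gibbs' inequality applies. The degenerate case $\mu=1$ has already been handled separately. I do not expect a serious obstacle.
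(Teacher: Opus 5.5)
Your proposal is correct and follows the paper's proof essentially step for step: the paper also treats $\mu=1$ separately, compares $p_L$ with the same geometric law (written as $(\mu-1)^{-1}(1-\mu^{-1})^{l}$, which equals your $g_l$) via $D(p_L\|p_{\wt{L}})\geq 0$, and gets the strict inequality from $(\mu-1)\log\bigl(1+\frac{1}{\mu-1}\bigr)<\log e$. Your remark that the cross-entropy is finite, so Gibbs' inequality applies to the infinite series, adds a small point of rigor that the paper leaves implicit.
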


{\it Proof:} The second equality in (\ref{eqn:HLUpB}) is obvious. The third inequality 
follows from 
$$
\mu h(\mu^{-1})
=\log \mu +(\mu-1)\log \left(1+\frac{1}{\mu-1}\right)
<\log \mu+ \log e.
$$
Hence it suffices to prove the first inequality.  
Let $p_L=\{p_{L}(l)\}_{l\geq 1}$. 
By definition it is obvious 
that $\mu\geq 1$. 
When $\mu=1$, we have $H(L)=0$. Hence we have the bound of Lemma \ref{lem:HLupb}. When $\mu>1$, let 
$\wt{L}$ be another positive integer-valued 
random variable having the geometrical distribution 
given by 
$p_{\wt{L}}(l)=(\mu-1)^{-1}
\left(1-\mu^{-1}\right)^{l}, l\geq 1. $  
Then we have the following:
\begin{align*}
0&\leq D(p_L||p_{\wt{L}})
=-H(L)-\sum_{l\geq 1}p_{L}(l)\log p_{\wt{L}}(l) 
\\
&=-H(L)+\mu\log\mu-(\mu-1)\log(\mu-1),
\end{align*}
completing the proof. 
\hfill\IEEEQED

{\it Proof of Proposition \ref{pos:Ltilde}:}
Fix $\bm{k}\in {\cal X}^n$ arbitrary. Set  $\underline{Y}_{\bm{k}} \coloneqq 
 \PhiknX$, which is a binary random 
sequence with the form:  
$
 \underline{Y}_{\bm{k}}
 = Y_{\bm{k},1}Y_{\bm{k},2}
 \cdots Y_{\bm{k},\LPhiknX}.
$
On upper bound of $H({\bm X})$, we have the 
following chain of inequalities: 

\begin{align*} \label{ineq:H(X)}
&nH(X)=H(\bm{X}) \MEq{a}H\left(\underline{Y}_{\bm{k}}\right) 
 =H\left(\underline{Y}_{\bm{k}}, \LPhiknX \right) 
\\
&= H\left(\LPhiknX \right)+ 
\sum_{l\geq 1}
    \Pr\{\LPhiknX =l\}
\\
&\quad \times 
H\left(
{Y}^l_{\bm{k}} \left|
  \LPhiknX=l\right.\right)
 \\
 & \MLeq{b} 
H\left( \LPhiknX \right) 
 + \sum_{l\geq 1}
  l\Pr\{ \LPhiknX=l\}
\\ 
&= H\left( \LPhiknX \right) 
  + \overline{L}_{\Phi^{(n)}_{\bm{k}}}.
  \stepcounter{equation}\tag{\theequation}
\end{align*}
Step (a) follows from 
$\Phi_{\bm{k}}^{(n)}$ 
is bijective. 
Step (b) follows from that since 
${Y}_{\bm{k}}^l \in \{0,1\}^l$, 
we have $H({Y}_{\bm{k}}^l \mid 
L_{{\PhiknX}}=l) \leq l$. 
From (\ref{ineq:H(X)}), we have 
\begin{align} \label{ineq:ELtent}
  \overline{L}_{\Phi^{(n)}_{\bm{k}}}
  \geq nH({X}) -H\left(\LPhiknX \right).
  \end{align}
We next evaluate the second term in the right members 
of (\ref{ineq:ELtent}). We first assume that 
$ \overline{L}_{\Phi^{(n)}_{\bm{k}}}
=\expectation\left[\LPhiknX
\right] \geq n\log|\mathcal{X}|.
$
In this case we have 
$
\overline{L}_{\Phi^{(n)}_{\bm{k}}} 
\geq n\log|\mathcal{X}|\geq nH(X).
$
Hence we have the bound (\ref{eqn:LtildeLbTwo}) 
in Proposition \ref{pos:Ltilde}.
We next assume that 
$\overline{L}_{\Phi^{(n)}_{\bm{k}}}
  < n\log|\mathcal{X}|$.
Then, by Lemma~\ref{lem:HLupb},
\begin{align*} 
& H\left({{\LPhiknX}}\right)
< \log\left(
  e\overline{L}_{\Phi^{(n)}_{\bm{k}} }\right)
  <\log (n e\log|\mathcal{X}|), 
\end{align*}
which together with~(\ref{ineq:ELtent}) 
yields that
\begin{align}
\overline{L}_{\Phi^{(n)}_{\bm{k}} }
 \geq nH({X})-\log n-\log(e\log|\mathcal{X}|).
\label{eqn:upbHL}
\end{align}
Taking expectation of both sides of (\ref{eqn:upbHL})
with respect to $p_{\bm{K}}$, we obtain the bound 
(\ref{eqn:LtildeLbTwo}) 
in Proposition \ref{pos:Ltilde}.
\hfill\IEEEQED

\subsection{Proof of Lemma \ref{lem:logOneMinuSq}}
\label{apd:logOneMinuSq}
For each $q\in [0,1)$, we have the following: 
\begin{align*}
  &(\log e)^{-1}(1-q)
  \log \frac{1}{1-q}
  =(1-q)\ln \frac{1}{1-q}
  \\
  & =(1-q)\sum_{k=1}^{\infty}\frac{q^k}{k}
  \leq (1-q)
  \left(q +\sum_{k=2}^{\infty}\frac{q^k}{2}
  \right)
  =q-\frac{1}{2}q^2\leq q, 
\end{align*}
from which we have that for each $q \in [0,1)$, 
\begin{align}
h(q)\leq [-\log q +\log e]q. 
\label{eqn:barQlogbarQ}
\end{align}
For $q \geq 0$, we set 
$g(q) \coloneqq [-\log q+\log e]q.$ Since 
$$
\frac{\rm d}{{\rm d}q}g(q)=-\log q,
$$
$g(q)$ is monotone increasing for $q \in [0,1]$. 
We first consider the case where 
$0\leq q \leq 1\leq \Omega$. In this case we have
$$
h(q)\leq 1 \leq \Omega \leq [\omega+ \log e]\Omega.
$$
We next consider the case where 
$0\leq q \leq\Omega < 1$. 
In this case we have
\begin{align*}
h(q)&\MLeq{a}[-\log q+\log e]q
\MLeq{b}[-\log \Omega+\log e]\Omega
\\
&=[\omega-\log \eta +\log e]\Omega
\MLeq{c}[\omega+\log e]\Omega.
\end{align*}
Step (a) follows from (\ref{eqn:barQlogbarQ}).
Step (b) follows from that $g(a)$ 
is a monotone increasing function of 
$a \in[0,1)$. Step (c) follows from 
$\eta \geq 1$. \hfill\IEEEQED

\end{document}